\newcommand{\R}{\mathbb{R}}
\renewcommand{\H}{\mathcal{H}}
\def\>{\rangle}
\def\<{\langle}
\newcommand{\Tr}{\operatorname{Tr}}
\newtheorem{theo}{Theorem}
\newtheorem{lemma}{Lemma}
\newtheorem{prop}{Proposition}
\newtheorem{defi}{Definition}
\newcommand{\E}{\mathbb{E}}
\newcommand{\worst}{{\operatorname{worst}}}
\newcommand{\avg}{{\operatorname{avg}}}
\newcommand{\choi}{{\operatorname{Choi}}}
\newcommand{\ol}{\overline}
\newcommand{\ul}{\underline}
\newcommand{\id}{\operatorname{id}}
\newcommand{\cE}{\mathcal{E}}
\newcommand{\cN}{\mathcal{N}}
\newcommand{\cD}{\mathcal{D}}
\newcommand{\cT}{\mathcal{T}}
\newcommand{\zw}[1]{{\color{black}#1}}
\newcommand{\zwcheck}[1]{{\color{black}#1}}
\newcommand{\linghang}[1]{{\color{black}#1}}
\newcommand{\remove}[1]{{\color{green}#1}}
\begin{document}
%\title{Performance Of Covariant Codes Generated From Random Unitaries}
\title{Charge-conserving unitaries typically generate optimal covariant quantum error-correcting codes}
%\title{}

% \preprint{MIT-CTP/5290}  % if using revtex4-1

%\renewcommand\Affilfont{\footnotesize}

\newcommand{\cC}{\mathcal{C}}

\renewcommand\Affilfont{\small}

\author[1]{Linghang Kong\thanks{linghang@mit.edu}}
\author[2]{Zi-Wen Liu\thanks{zliu1@perimeterinstitute.ca}}
\affil[1]{Center for Theoretical Physics, MIT, Cambridge, MA 02139, United States}
\affil[2]{Perimeter Institute for Theoretical Physics, Waterloo, Ontario N2L 2Y5, Canada}

\date{}
\maketitle

\begin{abstract}
Quantum error correction and symmetries play central roles in quantum information science and physics.  It is known that quantum error-correcting codes covariant with respect to continuous symmetries cannot correct erasure errors perfectly (an important case being the Eastin--Knill theorem), in contrast to the case without symmetry constraints. Furthermore, there are fundamental limits on the accuracy of such covariant codes for approximate quantum error correction. Here, we consider the quantum error correction capability of random covariant codes.  In particular, we show that $U(1)$-covariant codes generated by Haar random $U(1)$-symmetric unitaries, i.e.~unitaries that commute with the charge operator (or conserve the charge), typically saturate the fundamental limits to leading order in terms of both the average- and worst-case purified distances against erasure noise.  We note that the results hold for symmetric variants of unitary 2-designs, and comment on the convergence problem of charge-conserving random circuits. Our results not only indicate (potentially efficient) randomized constructions of optimal $U(1)$-covariant codes, but also reveal fundamental properties of random charge-conserving unitaries, which may underlie important models of complex quantum systems in wide-ranging physical scenarios where conservation laws are present, such as black holes and many-body spin systems.
\end{abstract}

\section{Introduction}

\zw{
One of the most fundamental and widely studied ideas in quantum information processing is quantum error correction \cite{PhysRevA.52.R2493,nielsen2011,gottesman2010,lidar2013}, which protects (logical) quantum systems against noise and errors by suitably encoding them into quantum codes living in a larger physical Hilbert space. 
Besides the clear importance to the practical realization of quantum computing and other technologies, quantum error correction and quantum codes have also drawn great interest in physics recently as they are found to arise in many important physical scenarios in e.g.~holographic quantum gravity \cite{almheiri2015,pastawski2015} and many-body physics \cite{KITAEV20032,2015arXiv150802595Z,brandao2019}.

Physical systems typically entail certain symmetries or conservation laws, which have played central roles in the developments of many areas in physics. 
Given the broad practical and physical relevance of quantum error correction, it is important to understand its performance and limitations under symmetry constraints.  More explicitly, the encoders are restricted to be covariant with respect to some symmetry group (i.e.~commute with certain group actions), generating the so-called \emph{covariant codes} \cite{hayden2017,faist2019,Woods2020continuousgroupsof}.  
%When considering local errors, 
Covariant codes are known to have broad relevance in both practical and theoretical aspects, arising in many important areas in quantum information and physics such as fault tolerance \cite{eastin2009}, quantum reference frames \cite{hayden2017}, AdS/CFT correspondence \cite{harlow2018,harlow2019,kohler2019,faist2019,Woods2020continuousgroupsof}, and condensed matter physics \cite{brandao2019}.

When the symmetry group is continuous, there exists fundamental limitations on the error correction capability of the corresponding covariant codes.  The Eastin--Knill theorem \cite{eastin2009} is a well known no-go theorem in this regard, which indicates that codes covariant with respect to continuous symmetries in the sense that the logical group actions are mapped to ``transversal'' physical actions that are tensor product on physical subsystems (a feature favorable for fault tolerance for that they do not spread errors \linghang{on} one subsystem to others) cannot correct {local} errors perfectly (for physical system with finite Hilbert space dimension).  Another understanding of this phenomenon is that some logical information is necessarily leaked into the environment due to the error, which forbids perfect recovery. 
However, it is still possible to perform error correction approximately. Several ``robust'' versions of the Eastin-Knill theorem or lower bounds on the inaccuracy of covariant codes are recently found \cite{faist2019,Woods2020continuousgroupsof,KubicaD,zlj20,YangWoods}, some of which employing methods from other areas of independent interest such as quantum clocks \cite{Woods2020continuousgroupsof}, quantum metrology \cite{KubicaD,zlj20,YangWoods}, and quantum resource theory \cite{zlj20,2020arXiv201011822F}.  
  
}

\zw{
This work concerns the achievability of such lower bounds.  In particular, here we consider the simple but important case of $U(1)$ symmetry corresponding to charge (energy) conservation, which is ubiquitous in physical systems.  We analyze codes generated by unitaries drawn from the Haar measure that obey the charge conservation law, i.e.~commute with the charge operator, which are particularly interesting because: i) The results indicate typical properties of all charge-conserving unitaries due to the Haar measure; ii) Haar random unitaries and their relatives including designs and random circuits play essential roles in the study of many-body quantum systems such as black holes \cite{hayden2007,HQRY16,2019arXiv190602219H} and chaotic spin systems \cite{Nahum2,Nahum1}, indicating that our refined model with conserved quantities is potentially of broad interest in physics.
We rigorously analyze the performance of our randomly constructed covariant code against erasure noise, as characterized by both the average-case and worst-case error (measure by the purified distance) among different input states.   
To do so, we use the complementary channel technique \cite{beny2010}, which gives characterization of the error rate of a code by the amount of information leaked into the environment.  For our random code we essentially break down the error into two components, one characterizing the deviation of the random code from its average leading to error which can be bounded using a ``partial decoupling'' theorem \cite{wakakuwa2019} and turns out to be exponentially small, while the other characterizing a polynomially small intrinsic error given by the average state. 
We show that in certain important cases our random code almost always saturates the lower bounds in \cite{faist2019} to leading order, indicating that charge-conserving unitaries typically give rise to optimal covariant codes.   The results hold if the Haar random charge-conserving unitary is simplified to corresponding 2-designs, and as we conjecture, efficient random circuits composed of charge-conserving local gates.   Note that in our case with charge conservation the error is intrinsically polynomially small, while in the no-symmetry case the error of such Haar random codes is normally exponentially small and there are perfect codes.

}

\zw{
The rest of this paper is structured as follows. In Section~\ref{sec:preliminary} we formally introduce the relevant background. In Section~\ref{sec:construction} we describe the random code construction, and show that the codes are indeed covariant. In Section~\ref{sec:performance}, we present rigorous analysis of the error of our random code as measured by Choi and worst case purified distances and make a few comments about the noise model and input charge.  In Section~\ref{sec:comparison}, we explicitly compare our upper bounds on the error of our random code with known lower bounds. In Section~\ref{sec:designs}, we discuss the extension of our study to symmetric $t$-designs and random circuits as well as several associated key problems.   We conclude the work with relevant discussions in Section~\ref{sec:discussion}. Several detailed technical calculations are left to the appendices.
}

\section{Preliminaries}
Here, we formally introduce the definitions and techniques that play key roles in this work.
\label{sec:preliminary}
\subsection{Approximate quantum error correction and complementary channel formalism}

The closeness of two quantum states $\rho$ and $\sigma$ can be characterized by their fidelity
\begin{equation}\label{fidelity}
    F(\rho,\sigma) \equiv \|\sqrt{\rho}\sqrt{\sigma}\|_1 = \Tr\sqrt{\sqrt{\rho}\sigma\sqrt{\rho}}.
\end{equation}
Note that the fidelity is sometimes defined as  $F(\rho,\sigma)^2$ in the literature, but we shall stick to the definition of Eq.~(\ref{fidelity}) in this work. The \emph{purified distance} $P(\rho,\sigma)$ is then  defined as
\[
    P(\rho,\sigma) = \sqrt{1-F(\rho,\sigma)^2}.
\]
It is known \cite[Section II]{tomamichel2010} that the purified distance satisfies the triangle inequality
\[
    P(\rho,\sigma) \le P(\rho,{\tau}) + P({\tau},\sigma)
\]
for any state {$\tau$}. It satisfies the following relation with 1-norm distance \cite{nielsen2011}
\begin{equation}
    \frac{1}{2}\|\rho-\sigma\|_1  \le P(\rho,\sigma) \le \sqrt{2\|\rho-\sigma\|_1}. \label{eq:distances}
\end{equation}

\linghang{In  this work, we consider \emph{approximate} quantum error correction, whose performance is quantified by comparing the input state and the state obtained after error correction. The input state is a joint state on the logical system $L$ and a reference system $R$.} Let $\cE$ be the encoding channel of some error correction code, and $\cN$ be the noisy channel.

\begin{comment}
\remove{ (To be removed)
By definition, when $\rho^{LR} = |\psi\>\<\psi|^{LR}$ is a pure state, $\epsilon_\rho$ is directly related to the entanglement fidelity \cite{schumacher1996,nielsen1996} by   \zwcheck{RHS is not right, $\psi^L$ missing. Use our fidelity  def. }
\[
    1-\epsilon_{|\psi\>\<\psi|}^2 = \max_\cD F(\psi^L, \cD \circ \cN \circ \cE)
\]
where $\psi^L$ is the reduced density matrix of $|\psi\>$ on $L$. \zwcheck{Delete:Note that their definition of fidelity is the square of ours.}}
\end{comment}

\zw{To characterize the overall performance of the  code, we consider two widely used strategies.
The first one makes use of the Choi isomorphism, which essentially considers a maximally entangled state as the input and characterizes the average-case error. }
More explicitly, when the reference system $R$ has the same Hilbert space dimension as $L$, we define the \linghang{Choi fidelity and} 
\zw{\emph{Choi error}}
of the code as
\[
   F_\choi \equiv \max_{\cD} F(\hat\phi^{LR}, [(\cD \circ \cN \circ \cE)^L \otimes \id^R](\hat\phi^{LR})),\quad \epsilon_\choi=\sqrt{1-F_\choi^2},
\]
where $\hat\phi^{LR}$ is the maximally entangled state between $L$ and $R$,
\[
    \hat\phi^{LR} = |\hat\phi\>\<\hat\phi|^{LR},\quad |\hat\phi\>^{LR} = \frac{1}{\sqrt{d_L}}\sum_{i=0}^{d_L-1} |i\>^L|i\>^R,
\]
and $d_L$ is the Hilbert space dimension of $L$. 
\zw{Note that this characterization of error via Choi states is closely related to the average behavior in the following sense.  It is natural to define the average-case error as 
\[
\epsilon_A:= \int d\psi P(\psi,(\cD \circ \cN \circ \cE) \psi),
\]
where $\psi^L$ denotes a pure logical state and the integral is over the uniform Haar measure.  Then we have that $\epsilon_\choi$ and $\epsilon_A$ are related by
\[
\epsilon_\choi = \sqrt{\frac{d+1}{d}}\epsilon_A,
\]
where $d$ is the dimension of the input (logical) system \cite{gilchrist2005distance,1998quant.ph..7091H}. In particular, as $d$ increases $\epsilon_\choi$ and $\epsilon_A$ approach the same value.
}
The second one is based on considering the worst-case fidelity and \zw{\emph{worst-case error}} defined as 
\[
    F_\worst \equiv  \max_{\cD}\min_{R,\rho^{LR}} F(\rho^{LR}, [(\cD \circ \cN \circ \cE)^L \otimes \id^R](\rho^{LR})),\quad \epsilon_\worst=\sqrt{1-F_\worst^2}.
\]
\linghang{Note that the minimization runs over all reference systems $R$ and all input states $\rho^{LR}$.}

The code errors $\epsilon_\choi$ and $\epsilon_\worst$ could be characterized using the formalism of complementary channels \cite{beny2010}. Let $A$ be the intermediate system that $\cN \circ \cE$ maps to. It is always possible to view $(\cN \circ \cE)^{L \to A}$ as a unitary mapping {from} $L$ to the joint system $A$ and the environment $E$, followed by a partial trace over $E$.  The complementary channel $(\widehat{\mathcal{N}\circ\mathcal{E}})^{L \to E}$ is given by tracing over $A$ and outputs the state left in the environment. Intuitively, an encoding is good if the environment does not learn much about the input. To be more precise,
\begin{align}
    \epsilon_\choi =& \min_{\zeta}P(({\widehat{\mathcal{N}\circ\mathcal{E}}}^{L\to E}\otimes \id^R)(|\hat\phi\>\<\hat\phi|^{LR}),(\mathcal{T}_\zeta^{L\to E}\otimes \id^R)(|\hat\phi\>\<\hat\phi|^{LR})), %\nonumber 
    \\
    \epsilon_\worst =& \min_{\zeta}\max_{\rho^{LR}} P((\widehat{\mathcal{N}\circ\mathcal{E}}^{L\to E}\otimes \id^R)(\rho^{LR}),(\mathcal{T}_\zeta^{L\to E}\otimes \id^R)(\rho^{LR})), \label{eq:complementary}
\end{align}
where $\mathcal{T}_\zeta$ is the constant channel
\[
    \mathcal{T}_\zeta^{L\to E}(\rho^L) = \Tr[\rho^L]\zeta^E.
\]
A property of the constant channel is that
\begin{equation}
    (\mathcal{T}_\zeta^{L\to E}\otimes \id^R)(\rho^{LR}) = \zeta^E \otimes \Tr_L[\rho^{LR}] \label{eq:const-channel},
\end{equation}
which will be useful for our calculations later.

\subsection{Covariant codes}
Let $G$ be a Lie group, and let $g\to U_A(g)$ and $g\to U_L(g)$ be representations of $G$ on the physical and logical Hilbert spaces respectively. A code is covariant with respect to $G$ if the encoding channel $\cE$ commutes with the representations, i.e.
\[
    \cE(U_L(g)\rho U_L(g)^\dagger) = U_A(g) \cE(\rho) U_A(g)^\dagger
\]
for all $g \in G$ and state $\rho$. \zw{A standard scenario (consider the Eastin--Knill theorem for local errors) is when  $U_A(g)$ takes the tensor product (transversal) form}
\[
    U_A(g) = U_1(g) \otimes U_2(g) \otimes \ldots \otimes U_n(g),
\]
where $U_i(g)$ acts on the $i$-th \zw{physical subsystem}.    The generators on the group $G$ will be represented as $T_L$ and $T_A$ on the logical and physical Hilbert spaces, and the tensor product form dictates that $T_A$ takes the form
\[
    T_A = \sum_{i=1}^n (T_A)_i,
\]
where $(T_A)_i$ only acts on $i$-th qubit.

\subsection{Conditional min-entropy}
\label{subsec:min-entropy}
For a bipartite state $\rho^{PQ}$, the conditional min-entropy (conditioned on $Q$) is defined as
\[
    H_{\min}(P|Q)_\rho = \sup_{\sigma\ge 0, \Tr\sigma=1}\sup\{\lambda\in \R| 2^{-\lambda}I^P\otimes \sigma^Q \ge \rho^{PQ}\}.
\]
For pure state $\psi  = |\psi\>\<\psi|^{PQ}$, there is a simple formula for $H_{\min}(P|Q)_\psi$. Let the Schmidt coefficients of $|\psi\>$ be $\alpha_1,\ldots, \alpha_D$, then the conditional min-entropy is given by
\begin{equation}
    H_{\min}(P|Q)_\psi = -2\log(\alpha_1 + \ldots + \alpha_D). \label{eq:min-max}
\end{equation}
To see this, note that \cite{konig2009} 
for any tripartite pure state $\rho$ on $X,Y$ and $Z$,
\[
    H_{\min}(X|Y)_\rho + H_{\max}(X|Z)_\rho = 0.
\]
where the conditional max-entropy $H_{\max}$ is defined as
\[
    H_{\max}(X|Z)_\rho = \sup_{\sigma^Z} \log F(\rho^{XZ},I^X \otimes \sigma^Z)^2.
\]
Note that $I^X \otimes \sigma^Z$ is not a normalized quantum state and $F(\rho^{XZ},I^X \otimes \sigma^Z)$ should be interpreted as $d_X F\left(\rho^{XZ},\frac{I^X}{d_X} \otimes \sigma^Z\right)$.

In our case the state $|\psi\>$ is pure on $P$ and $Q$, so we can choose the third register $R$ to be a trivial system, and therefore
\begin{equation}
    H_{\min}(P|Q)_\psi = -H_{\max}(P|R)_\psi = -2\log\|\sqrt{\psi^P}\sqrt{I^P}\|_1 = -2\log(\Tr[\sqrt{\psi^P}]) = -2\log(\alpha_1 + \ldots + \alpha_D). \label{eq:pure-min}
\end{equation}

\subsection{Decoupling and partial decoupling}

The (one-shot) decoupling theorem  \cite{dupuis2014} {charaterizes the degree to which a system is decoupled from the environment under certain channels in terms of (suitable variants of) conditional min-entropies.} 
It can actually be viewed as a concentration-of-measure type bound
where the randomness comes from a Haar random unitary acting on the system. To be more precise, for any bipartite state $\rho^{AR}$ and quantum channel $\tau^{A\to E}$,  
the decoupling theorem gives an upper bound for the following quantity
\[
    \E_{U^A \sim \text{Haar}}\|(\cT^{A\to E} \otimes \id^B)[U^A \rho^{AR} U^{A\dagger}]-(\cT^{A\to E} \otimes \id^B)[ \rho^{AR}_{\avg} ]\|_1
\]
where
\[
    \quad \rho^{AR}_{\avg} \equiv \E_{U^A \sim \text{Haar}} U^A \rho^{AR} U^{A\dagger}.
\]

A generalization of decoupling called \emph{partial decoupling} that will be useful for our purpose was studied in \cite{wakakuwa2019}, where the unitary $U^A$ could take a more general form. 
{We assume that} the Hilbert space of $A$ takes the form of a direct-sum-product decomposition
\[
    \H^A = \bigoplus_{j=1}^J \H_j^{A_l} \otimes \H_j^{A_r},
\]
and $U^A$ satisfies
\[
    U^A = \bigoplus_{j=1}^J  I_j^{A_l} \otimes U_j^{A_r}
\]
where $U_j^{A_r}$ is Haar random within $\H_j^{A_r}$. The distribution of such $U^A$ will be called $H_\times$. Let $l_j$ and $r_j$ be the dimensions of $\H_j^{A_l}$ and $\H_j^{A_r}$ respectively. The (non-smoothed) partial decoupling theorem \cite[Eq.~(79)]{wakakuwa2019} states that
%\zwcheck{Below: subscript ``avg''?}

\[
    \mathbb{E}_{U \sim H_{\mathrm{x}}}\left[\left\|\mathcal{T}^{A \rightarrow E} \circ \mathcal{U}^{A}\left(\Psi^{A R}\right)-\mathcal{T}^{A \rightarrow E}\left(\Psi_\avg^{A R}\right)\right\|_{1}\right] \leq 2^{-\frac{1}{2} H_{\min }\left(A^{*} \mid R E\right)_{\Lambda(\Psi, \mathcal{T})}}.
\]
Here, the state $\Lambda(\Psi, \mathcal{T})$ is defined as
\[
    \Lambda(\Psi, \mathcal{T}) = F(\Psi^{AR} \otimes \tau^{\bar A E})F^\dagger,
\]
where $\tau^{\bar A E}$ is the Choi-Jamiolkowski state of $\mathcal{T}$ and the operator $F^{A\bar A \to A^*}$ is
\[
    F^{A \bar{A} \rightarrow A^{*}}:=\bigoplus_{j=1}^J \sqrt{\frac{d_A l_j}{r_j}}\<\Phi_j^l|^{A_l\bar A_l}\left(\Pi_{j}^{A} \otimes \Pi_{j}^{\bar{A}}\right)
\]
with $|\Phi_j^l\>$ being the maximally mixed state. $\Pi_{j}^{A}$ is the projector into $\H_j^{A_l} \otimes \H_j^{A_r}$. Again the average state is defined as
\[
    \Psi_{\avg}^{A R} = \E_{U\sim H_\times} U \Psi^{A R} U^\dagger.
\]
and could be calculated using the formula
\begin{equation}
    \Psi_{\avg}^{A R} = \bigoplus_{j=1}^J \Psi_{jj}^{A_l R} \otimes \frac{I_j^{A_r}}{r_j}, \quad \Psi_{jj}^{A_l R} = \Tr_{A_r}[\Pi_j^A \Psi^{AR} \Pi_j^A]. \label{eq:avg-state}
\end{equation}

\section{Covariant codes from random unitaries}
\label{sec:construction}

\zw{We now define the construction of $U(1)$-covariant quantum error-correcting codes based on random charge-conserving unitaries that will be analyzed in this work.

Consider the Hamming weight operator on $m$ qubits
\[
    Q^{(m)} = \sum_{i=1}^m \frac{I - Z_i}{2},
\]
where $Z = |0\>\<0| - |1\>\<1|$ is the Pauli Z operator on a single qubit, and $Z_i$ is the operator $Z$ acting on qubit $i$. 
We consider codes that encode $k$ logical qubits in $n$ physical qubits, that are required to be covariant with respect to $U(1)$ represented by (without loss of generality)  Hamming weight operator. To be more precise, we consider Lie group $U(1)$, where the group action $e^{i\theta} \in U(1)$ is represented as $e^{i\theta} \to e^{i\theta Q^{(k)}}$ and $e^{i\theta} \to e^{i\theta Q^{(n)}}$ on the logical and physical Hilbert spaces respectively.

Our construction relies on $n$-qubit unitaries that conserve the Hamming weight, i.e.~commute with $Q^{(n)}$. Such unitaries are block diagonal with respect to the eigenspaces of $Q^{(n)}$. 
Let $H_\times$ be the Haar measure on the group of such unitaries.
We define \emph{$(n,k,\alpha)$-codes} 
as follows.

\begin{defi}
We call a code a $(n,k,\alpha)$-code if it encodes $k$ logical qubits in $n$ physical qubits by first appending an $(n-k)$-qubit state $|\psi_\alpha\>$ with Hamming weight $\alpha$, i.e.
\[
    Q^{(n-k)} |\psi_\alpha\> = \alpha |\psi_\alpha\>,
\]
and then applying a unitary $U$ on the joint $n$-qubit system that commutes with $Q^{(n)}$.  In particular, a \emph{$(n,k,\alpha)$-random code} is given by $U$ drawn from $H_\times$.
\end{defi}
It is easy to verify that such codes indeed satisfy the covariance condition.
\begin{prop}
\label{thm:cov}
 $(n,k,\alpha)$-codes are covariant with respect to a $U(1)$ symmetry, with the logical charge operator $T_L=Q^{(k)}$ and physical charge operator $T_A = Q^{(n)}$. 
Note that this property holds for the $(n,k,\alpha)$-random code.
\end{prop}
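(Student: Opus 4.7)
The plan is straightforward: unpack the encoding channel explicitly and verify the covariance identity $\mathcal{E}(e^{i\theta Q^{(k)}}\rho e^{-i\theta Q^{(k)}}) = e^{i\theta Q^{(n)}}\mathcal{E}(\rho)e^{-i\theta Q^{(n)}}$ by moving the group action from the logical to the physical side in three stages. For a fixed $U$ that commutes with $Q^{(n)}$, the encoding channel of an $(n,k,\alpha)$-code acts as
\[
    \mathcal{E}(\rho) = U\bigl(\rho \otimes |\psi_\alpha\rangle\langle\psi_\alpha|\bigr)U^\dagger,
\]
so the goal is to push the logical phase $e^{i\theta Q^{(k)}}$ through the appended ancilla and through $U$.

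First, I would use the fact that under the natural identification of the physical $n$-qubit Hilbert space with the tensor product of the logical $k$ qubits and the ancillary $n-k$ qubits, the total charge operator splits as $Q^{(n)} = Q^{(k)}\otimes I + I \otimes Q^{(n-k)}$, which immediately gives $e^{i\theta Q^{(n)}} = e^{i\theta Q^{(k)}}\otimes e^{i\theta Q^{(n-k)}}$. Second, since $|\psi_\alpha\rangle$ is by definition an eigenvector of $Q^{(n-k)}$ with eigenvalue $\alpha$, the ancilla state satisfies $e^{i\theta Q^{(n-k)}}|\psi_\alpha\rangle\langle\psi_\alpha|e^{-i\theta Q^{(n-k)}} = |\psi_\alpha\rangle\langle\psi_\alpha|$ (the $e^{\pm i\theta\alpha}$ phases cancel). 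Combining these two observations,
\[
    e^{i\theta Q^{(k)}}\rho e^{-i\theta Q^{(k)}} \otimes |\psi_\alpha\rangle\langle\psi_\alpha| = e^{i\theta Q^{(n)}}\bigl(\rho \otimes |\psi_\alpha\rangle\langle\psi_\alpha|\bigr)e^{-i\theta Q^{(n)}}.
\]

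Third, conjugating both sides by $U$ and invoking $[U,Q^{(n)}]=0$, which allows us to commute $e^{i\theta Q^{(n)}}$ past $U$, yields exactly $e^{i\theta Q^{(n)}}\mathcal{E}(\rho)e^{-i\theta Q^{(n)}}$, establishing covariance for every charge-conserving $U$. Since this argument holds pointwise for each $U$ in the support of $H_\times$, the claim for the random code follows immediately: every realization is a covariant code, hence so is every draw from $H_\times$. There is no real obstacle here — the proposition is essentially a bookkeeping check that the three ingredients of the construction (tensor-product decomposition of $Q^{(n)}$, eigenstate property of $|\psi_\alpha\rangle$, and the commutation $[U,Q^{(n)}]=0$) combine in the obvious way; the only thing to be careful about is keeping track of the identification between the $n$-qubit physical system and the logical-plus-ancilla bipartition so that the charge operators add correctly.
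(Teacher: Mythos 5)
Your proof is correct and uses exactly the same three ingredients as the paper's own argument (the additive splitting $Q^{(n)}=Q^{(k)}\otimes I + I\otimes Q^{(n-k)}$, the invariance of $|\psi_\alpha\rangle\langle\psi_\alpha|$ under $e^{i\theta Q^{(n-k)}}$, and the commutation of $U$ with $e^{i\theta Q^{(n)}}$), merely pushing the phase from the logical side to the physical side rather than the reverse. No issues.
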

}
\begin{proof}
Since the $n$-qubit unitary $U$ commutes with $Q^{(n)}$, it commutes with $e^{iQ^{(n)}\theta}$ for all $\theta$. Then for any $k$-qubit logical state $\rho$ we have
\begin{align*}
    e^{i Q^{(n)} \theta} U(\rho \otimes |\psi_\alpha\>\<\psi_\alpha|)U^\dagger e^{-iQ^{(n)}\theta} =& U e^{i Q^{(n)} \theta} (\rho \otimes |\psi_\alpha\>\<\psi_\alpha|) e^{-iQ^{(n)}\theta} U^\dagger \\=& U  (e^{i Q^{(k)} \theta}\rho e^{-i Q^{(k)} \theta} \otimes e^{i Q^{(n-k)} \theta}|\psi_\alpha\>\<\psi_\alpha|e^{-i Q^{(n-k)} \theta})  U^\dagger\\
    =& U  (e^{i Q^{(k)} \theta}\rho e^{-i Q^{(k)} \theta} \otimes |\psi_\alpha\>\<\psi_\alpha|)  U^\dagger,
\end{align*}
which means the encoding map $\cE$ satisfies the covariance condition
\[
    e^{i Q^{(n)} \theta} \cE(\rho) e^{-i Q^{(n)} \theta} = \cE(e^{i Q^{(k)} \theta}\rho e^{-i Q^{(k)} \theta}).
\]
\end{proof}
\zw{Again, note that the $(n,k,\alpha)$-random code can be viewed as a randomized construction of covariant code, and in addition, it reveals the average, or typical, properties of charge-conserving unitaries due to the Haar measure.
}

\section{Performance of random covariant codes}\label{sec:performance}
We now derive explicit bounds on both Choi and worst-case errors of $(n,k,\alpha)$-random code against erasure of $t$ qubits in terms of purified distance.  
We will fix the set of erased qubits in our analysis, but the results still hold when the $t$ qubits are picked randomly, as will be discussed in Section~\ref{sec:remarks}.
\subsection{Choi error}
\label{sec:choi}

In this part we analyze the Choi error of $(n,k,\alpha)$-codes. Since the noise channel we consider is an erasure of $t$ qubits, the complementary channel will be a partial trace over the $n-t$ unaffected qubits. We denote this by $\Tr_{n-t}[\cdot]$. We obtain the following result.

\begin{theo}
\label{thm:choi}
\linghang{
In the large $n$ limit, when $k$ and $t$ satisfy $k^2t^2 = o(n)$ and $\alpha=a n$ with $0<a<1$ being a constant, the \zw{expected Choi error of the $(n,k,\alpha)$-random code} satisfies}
\begin{equation}
    \E \epsilon_\choi \le  \frac{\sqrt{tk}}{4n\sqrt{a(1-a)}}\left(1+O\left(\frac{k^2t^2}{n}\right)\right).
\end{equation}
Furthermore, the probability that \zw{the Choi error of a $(n,k,\alpha)$-code (with respect to $H_\times$)}  violates the  inequality above is exponentially small, i.e.
\begin{equation}
\Pr\left[\epsilon_\choi >  \frac{\sqrt{tk}}{4n\sqrt{a(1-a)}}\left(1+O\left(\frac{k^2t^2}{n}\right)\right)\right] = e^{-\Omega(n)}.
\end{equation}
\end{theo}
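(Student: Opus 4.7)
The strategy combines the complementary-channel characterization of $\epsilon_\choi$ with a triangle inequality that splits the error into fluctuations of the random code around its Haar average and the intrinsic error of the averaged code.

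\textbf{Setup.} Let $\Psi^{AR} := (U\otimes I^R)(\hat\phi^{LR}\otimes |\psi_\alpha\>\<\psi_\alpha|^S)(U^\dagger\otimes I^R)$, where $S$ is the ancilla and $A = LS$ is the physical system, and let $\cT^{A\to E} = \Tr_{n-t}$ be the complementary channel. Choose $\zeta^E := \Tr_R\cT(\Psi^{AR}_{\avg})$. The triangle inequality for the purified distance then gives
\[
\epsilon_\choi \;\le\; \underbrace{P\!\left(\cT(\Psi^{AR}),\,\cT(\Psi^{AR}_{\avg})\right)}_{\text{fluctuation term}} \;+\; \underbrace{P\!\left(\cT(\Psi^{AR}_{\avg}),\,\zeta^E\otimes I^R/2^k\right)}_{\text{intrinsic term}}.
\]

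\textbf{Fluctuation (exponentially small) term.} The charge eigenspaces $\mathcal{H}_j^A$ are irreducible under $H_\times$, so the direct-sum-product decomposition has $l_j = 1$ and $r_j = \binom{n}{j}$. Using $P\le\sqrt{2\|\cdot\|_1}$ together with Jensen's inequality and the partial decoupling theorem, the expected fluctuation term is bounded by $\sqrt{2}\cdot 2^{-H_{\min}(A^*|RE)_{\Lambda(\Psi,\cT)}/4}$. The essential subtask is to evaluate $H_{\min}$: because $\Psi^{AR}$ has a sector-wise pure structure and the Choi state $\tau^{\bar A E}$ of the partial-trace channel itself decomposes over charges of the retained and traced-out qubits, $\Lambda(\Psi,\cT)$ factorizes block-by-block. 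Applying the closed form (\ref{eq:pure-min}) within each charge sector and using $\sum_j \binom{n}{j} = 2^n$ should yield $H_{\min} = \Omega(n)$ and hence $e^{-\Omega(n)}$. Concentration at the same exponential scale follows from Markov's inequality applied to the nonnegative $\|\cdot\|_1$ deviation, giving the stated $e^{-\Omega(n)}$ probability bound (the intrinsic term being deterministic).

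\textbf{Intrinsic (polynomially small) term.} Plugging our input into (\ref{eq:avg-state}) gives
\[
\Psi^{AR}_{\avg} = \sum_{s=0}^{k} \frac{1}{2^k}\, \Pi_s^R \otimes \frac{\Pi_{s+\alpha}^A}{\binom{n}{s+\alpha}},
\]
where $\Pi_s^R$ and $\Pi_j^A$ are projectors onto charge-$s$ and charge-$j$ subspaces. Applying $\cT = \Tr_{n-t}$ produces a classical-quantum state diagonal in the charges of both $R$ and $E$, with the conditional distribution of $E$'s charge $s'$ given $R$'s charge $s$ being hypergeometric $\mathrm{Hyp}(n, s+\alpha, t)$. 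The comparison state $\zeta^E \otimes I^R/2^k$ shares the same block-diagonal structure but with $s'$ distributed as $\bar q(s') := \sum_s p(s) q(s'|s+\alpha)$, where $p(s) = \binom{k}{s}/2^k$. Since the block atoms are orthogonal, the fidelity reduces to the classical Bhattacharya sum
\[
F = \sum_{s} p(s) \sum_{s'} \sqrt{q(s'|s+\alpha)\,\bar q(s')}.
\]
In the regime $k^2 t^2 = o(n)$, each hypergeometric is sharply concentrated around mean $t(s+\alpha)/n$ with variance $\approx t a (1-a)$. A Gaussian approximation and Taylor expansion in $(s-k/2)/n$ give $1 - F = \frac{tk}{32 n^2 a(1-a)}(1 + O(k^2 t^2/n))$, so that $P = \sqrt{1-F^2} = \frac{\sqrt{tk}}{4n\sqrt{a(1-a)}}(1 + O(k^2 t^2/n))$. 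Combining with the exponentially small fluctuation term yields the stated expectation bound.

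\textbf{Main obstacle.} The principal technical burden lies in the intrinsic term: controlling the Gaussian approximation to the mixture of hypergeometrics tightly enough to produce both the exact leading constant $1/(4\sqrt{a(1-a)})$ and an explicit $O(k^2 t^2/n)$ correction uniformly in the parameter range, which requires handling the fourth moment of $s - k/2$ as well as the sub-leading terms in the Stirling/Gaussian expansion of the hypergeometric pmf. The partial decoupling min-entropy estimate is also nontrivial, but because only $H_{\min} = \Omega(n)$ is needed to make the fluctuation term negligible compared with the polynomial intrinsic bound, loose sector-wise bounds from (\ref{eq:pure-min}) suffice.
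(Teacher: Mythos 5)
Your proposal mirrors the paper's proof essentially step for step: the same complementary-channel/triangle-inequality split into a fluctuation term (bounded via $P\le\sqrt{2\|\cdot\|_1}$, Jensen, partial decoupling with $l_j=1$, $r_j=\binom{n}{j}$, and Markov) plus an intrinsic term for the averaged state with the same choice of $\zeta$ as the marginal of $\Tr_{n-t}\Phi_{\avg}$, and your reduction of the intrinsic term to a Bhattacharyya sum over hypergeometric distributions is exactly the structure the paper's Appendix B exploits. The only divergence is computational: the paper evaluates that sum by an exact falling-factorial binomial expansion rather than a Gaussian approximation, which sidesteps the uniformity issues you correctly flag (the hypergeometric is far from Gaussian when $t=O(1)$, though a second-order Hellinger expansion still works there) and lands on the same leading behavior $1-F=\frac{tk}{32a(1-a)n^2}$.
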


\begin{proof}
By Eq.~\eqref{eq:complementary}, for a specific choice of $U$ in our $(n,k,\alpha)$-code construction, the Choi error of the corresponding code is given by
\begin{align}
    \epsilon_\choi =& \min_\zeta P\left(\Tr_{n-t}[U(|\hat\phi\>\<\hat\phi|\otimes |\psi_\alpha\>\<\psi_\alpha|)U^\dagger],\frac{I}{2^k}\otimes \zeta\right) \nonumber\\
    \le& P\left(\Tr_{n-t}[U(|\hat\phi\>\<\hat\phi|\otimes |\psi_\alpha\>\<\psi_\alpha|)U^\dagger],\Tr_{n-t}\Phi_\text{avg}\right)+\min_\zeta P\left(\Tr_{n-t}\Phi_\text{avg},\frac{I}{2^k}\otimes \zeta\right) \nonumber \\
    \le& \sqrt{2\left\|\Tr_{n-t}[U(|\hat\phi\>\<\hat\phi|\otimes |\psi_\alpha\>\<\psi_\alpha|)U^\dagger]-\Tr_{n-t}\Phi_\text{avg}\right\|_1}+\min_\zeta P\left(\Tr_{n-t}\Phi_\text{avg},\frac{I}{2^k}\otimes \zeta\right), \label{eq:bound}
\end{align}
where the average state is
\[
    \Phi_\text{avg} = \E_{U\sim H_\times} U(|\hat\phi\>\<\hat\phi|\otimes |\psi_\alpha\>\<\psi_\alpha|)U^\dagger.
\]
Note that Eq.~\eqref{eq:const-channel} has been used in the first line, \zw{and the second line follows from the triangle inequality of $P$}.

When averaging over $U$ sampled from $H_\times$, the first term in Eq.~\eqref{eq:bound} can be bounded using the partial decoupling theorem:
\begin{align*}
    &\E_{U\sim H_\times} \sqrt{2\left\|\Tr_{n-t}[U(|\hat\phi\>\<\hat\phi|\otimes |\psi_\alpha\>\<\psi_\alpha|)U^\dagger]-\Tr_{n-t}\Phi_\text{avg}\right\|_1} \\
    \le& \sqrt{2\E_{U\sim H_\times}\left\|\Tr_{n-t}[U(|\hat\phi\>\<\hat\phi|\otimes |\psi_\alpha\>\<\psi_\alpha|)U^\dagger]-\Tr_{n-t}\Phi_\text{avg}\right\|_1} \\
    \le & \sqrt{2} \times 2^{-\frac{1}{4}H_{\min}(A^*|RE)_\Lambda}.
\end{align*}

We prove in Appendix~\ref{app:entropy} that
\[
    H_{\min }(A^*|RE)_\Lambda = \Omega(n),
\]
which implies that the expectation value of the first term is exponentially small. A simple application of the Markov's inequality shows that this term is exponentially small with probability equal to one minus an exponentially small amount.

The second term in Eq.~\eqref{eq:bound} is independent of $U$. Since this is a minimization, an upper bound on this term can be found by any choice of $\zeta$. We set $\zeta$ to be the $t$-qubit marginal state of $\Tr_{n-t}[\Phi_\text{avg}]$, and as detailed in Appendix~\ref{app:average-state} we obtain
\[
    P\left(\Tr_{n-t}\Phi_\text{avg},\frac{I}{2^k}\otimes \zeta\right) \le \frac{\sqrt{tk}}{4n\sqrt{a(1-a)}}\left(1+O\left(\frac{t^2k^2}{n}\right)\right).
\]

% \mathbb{E}_{U \sim \mathrm{H}_{\mathrm{x}}}\left[\left\|\mathcal{T}^{A \rightarrow E} \circ \mathcal{U}^{A}\left(\Psi^{A R}\right)-\mathcal{T}^{A \rightarrow E}\left(\Psi_{\mathrm{av}}^{A R}\right)\right\|_{1}\right] \leq 2^{-\frac{1}{2} H_{\min }\left(A^{*} \mid R E\right)_{\Lambda(\Psi, \mathcal{T})}}

\end{proof}

\subsection{Worst-case error}
\label{sec:worst-case}
Here we consider the worst-case error of $(n,k,\alpha)$-codes under erasure of $t$ qubits. The following lemma gives an lower bound of worst case purified distance for a fixed code.
\begin{lemma}[{\cite[Thm.~3]{faist2019}}]
\label{lem:worst-bound}
For any encoding channel $\mathcal{E}$ and  any noise channel $\mathcal{N}$, let $\widehat{\mathcal{N}\circ\mathcal{E}}$ be a complementary channel of $\mathcal{N}\circ\mathcal{E}$. Fixing a basis of logical states $\{|x\>\}$, we define
\begin{equation}
    \rho^{x,x'} = \widehat{\mathcal{N}\circ\mathcal{E}} (|x\>\<x'|). \label{eq:rho-def}
\end{equation}
Assume that there exists a state $\zeta$, as well as constants $\epsilon,\epsilon'>0$ such that
\begin{align}
    P(\rho^{x,x}, \zeta) \le& \epsilon, \label{eq:rhoxx}\\
    \|\rho^{x,x'}\|_1 \le & \epsilon', \quad \forall x \not= x'. \label{eq:rhoxxp}
\end{align}
Then, the code $\mathcal{E}$ is an approximate error-correcting code with an approximation parameter satisfying
\begin{equation}
    \epsilon_\worst \le \epsilon+ d_L \sqrt{\epsilon'}, \label{eq:worst-bound}
\end{equation}
where $d_L$ is the dimension of the logical system.

If one of several noise channels is applied at random but it is known which one occurred, then Eq.~\eqref{eq:worst-bound} holds for the overall noise channel if the assumptions above are satisfied for each individual noise channel.
\end{lemma}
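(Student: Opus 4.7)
The plan is to invoke the complementary-channel characterization of $\epsilon_\worst$ from Eq.~\eqref{eq:complementary} with $\zeta$ taken to be the state supplied by the hypothesis, so that the task reduces to bounding
\[
P\bigl((\widehat{\cN\circ\cE}\otimes\id)(\rho^{LR}),\,\zeta\otimes\Tr_L\rho^{LR}\bigr)
\]
uniformly over all inputs. Since $R$ is arbitrary and purified distance is monotone under partial trace, it suffices to consider pure inputs. Expanding any such $|\psi\>^{LR}$ in the fixed basis as $|\psi\>=\sum_x|x\>^L\otimes|\psi_x\>^R$, I would write
\[
(\widehat{\cN\circ\cE}\otimes\id)(|\psi\>\<\psi|)=\sum_{x,x'}\rho^{x,x'}\otimes|\psi_x\>\<\psi_{x'}|,
\]
and split this into a diagonal part $\sigma_d=\sum_x\rho^{x,x}\otimes|\psi_x\>\<\psi_x|$ and an off-diagonal part $\sigma_{od}=\sum_{x\neq x'}\rho^{x,x'}\otimes|\psi_x\>\<\psi_{x'}|$. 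The target splits compatibly as $\zeta\otimes\Tr_L|\psi\>\<\psi|=\zeta\otimes\sum_x|\psi_x\>\<\psi_x|$.

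For the diagonal piece I would exhibit both $\sigma_d$ and the target as convex combinations with the same weights $p_x=\<\psi_x|\psi_x\>$, namely $\sigma_d=\sum_x p_x(\rho^{x,x}\otimes|\tilde\psi_x\>\<\tilde\psi_x|)$ and $\zeta\otimes\Tr_L|\psi\>\<\psi|=\sum_x p_x(\zeta\otimes|\tilde\psi_x\>\<\tilde\psi_x|)$ with $|\tilde\psi_x\>=|\psi_x\>/\sqrt{p_x}$, and then use joint concavity of fidelity together with hypothesis~\eqref{eq:rhoxx} to obtain $F(\sigma_d,\zeta\otimes\Tr_L|\psi\>\<\psi|)\ge\sum_x p_x F(\rho^{x,x},\zeta)\ge\sqrt{1-\epsilon^2}$, i.e.\ purified distance at most $\epsilon$. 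For the off-diagonal piece, the trace-norm triangle inequality together with hypothesis~\eqref{eq:rhoxxp} and the identity $\||\psi_x\>\<\psi_{x'}|\|_1=\sqrt{p_xp_{x'}}\le 1$ would give $\|\sigma_{od}\|_1\le\epsilon'\sum_{x\neq x'}\sqrt{p_xp_{x'}}\le d_L^2\epsilon'$, and Eq.~\eqref{eq:distances} would convert this to a purified-distance contribution of at most $d_L\sqrt{2\epsilon'}$. Combining via the triangle inequality for $P$ yields the claimed bound $\epsilon+d_L\sqrt{\epsilon'}$ up to absolute constants.

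The closing clause about a random mixture of known noise channels I would handle by adjoining a classical ``which-channel'' register to the environment: the overall complementary channel is then block-diagonal in this register, so the $\rho^{x,x}$ and $\rho^{x,x'}$ blocks inherit the individual bounds (by the same block-diagonal fidelity and trace-norm calculations used above), and the two-part argument applies verbatim to the overall channel. The step I expect to be most delicate is the off-diagonal estimate: the coarse inequality $\sqrt{p_xp_{x'}}\le 1$ is wasteful, and sharper Cauchy--Schwarz summation $\sum_{x,x'}\sqrt{p_xp_{x'}}\le d_L$ actually yields only $\sqrt{d_L\epsilon'}$, so reproducing the stated constant $d_L\sqrt{\epsilon'}$ requires using the crude bound in conjunction with the $\sqrt{2\|\cdot\|_1}$ conversion. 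A direct fidelity-based analysis that avoids passing through the trace norm would presumably give a tighter $d_L$-dependence but is more intricate to set up when the $|\psi_x\>$ are not mutually orthogonal.
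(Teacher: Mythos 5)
The paper does not actually prove this lemma: it is imported as Theorem~3 of \cite{faist2019} and used as a black box, so there is no internal proof to compare against. Your reconstruction follows what is essentially the standard (and, in substance, the original) route --- pass to the complementary-channel characterization of $\epsilon_\worst$, reduce to pure inputs $|\psi\>=\sum_x|x\>|\psi_x\>$, handle the diagonal block $\sum_x p_x\,\rho^{x,x}\otimes|\tilde\psi_x\>\<\tilde\psi_x|$ by joint concavity of fidelity, and the off-diagonal block in trace norm --- and I see no gap in the substance; the block-diagonal ``which-channel'' flag for the random-noise clause is also the right move. Two bookkeeping corrections. First, as you note, your accounting only delivers $\epsilon+\sqrt{2}\,d_L\sqrt{\epsilon'}$; to land on the stated constant, use the tight Fuchs--van de Graaf direction $1-F\le\tfrac12\|\rho-\sigma\|_1$, hence $P\le\sqrt{\|\rho-\sigma\|_1}$, together with $\sum_{x\ne x'}\sqrt{p_xp_{x'}}\le\bigl(\sum_x\sqrt{p_x}\bigr)^2-1\le d_L-1$, giving an off-diagonal contribution of $\sqrt{(d_L-1)\epsilon'}\le d_L\sqrt{\epsilon'}$ (the paper's Eq.~\eqref{eq:distances} is simply not the sharpest form of that inequality). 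Second, your closing remark has the comparison backwards: $\sqrt{d_L\epsilon'}\le d_L\sqrt{\epsilon'}$, so the Cauchy--Schwarz route is \emph{stronger} than the stated bound, not weaker; there is no need to fall back on the crude estimate $\sqrt{p_xp_{x'}}\le1$.
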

For our $(n,k,\alpha)$-code construction, 
\[
 {\rho^{x,x'}} = \Tr_{n-t}[U(|x\>\<x'| \otimes |\psi_\alpha\>\<\psi_\alpha|)U^\dagger].
\]
Note that the lemma above applies to a fixed encoding $\mathcal{E}$. To generalize this theorem to our randomized construction, we define $\rho^{x,x'}_\avg$ as $\rho^{x,x'}$ in Eq.~\eqref{eq:rho-def} averaged over the random unitary in $\mathcal{E}$. Then using the following lemma, we can obtain bounds on the worst-case error. 
\begin{lemma}
Consider the large $n$ limit where parameters $\alpha$, $k$, $t$, $\epsilon$, $\delta$ and $\delta'$ each might depend on $n$. If the average states $\rho^{x,x'}_\avg$ satisfy $P(\rho^{x,x}_\avg, \zeta) \le \epsilon$ for some fixed state $\zeta$ independent of $x$, then with probability at least $1-p_1-p_2$ the code generated by our construction satisfy
\[
    \epsilon_\worst \le \epsilon + \delta + 2^k \sqrt{\delta'}.
\]
Here, $p_1$ and $p_2$ are defined as
\begin{align}
    \log p_1 =& k-\frac{n}{4}\min\left\{H\left(\frac{\alpha}{n}\right),H\left(\frac{\alpha+k}{n}\right)\right\} + \frac{t}{2}+\log\frac{1}{\delta} + O(\log n),\nonumber\\
    \log p_2 =& 2k-\frac{n}{2} \min\left\{H\left(\frac{\alpha}{n}\right),H\left(\frac{\alpha+k}{n}\right)\right\} + t + \log\frac{1}{\delta'} + O(\log n). \label{eq:p1p2}
\end{align}
\label{lem:worst-randomized}
\end{lemma}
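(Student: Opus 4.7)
The plan is to reduce the statement to the deterministic bound of Lemma~\ref{lem:worst-bound} by verifying its two hypotheses hold with controlled probability over the random $U\sim H_\times$. Concretely, with probability at least $1-p_1-p_2$ I would establish (a) $P(\rho^{x,x},\zeta)\le\epsilon+\delta$ for every $x\in\{0,1\}^k$, and (b) $\|\rho^{x,x'}\|_1\le\delta'$ for every $x\ne x'$; Lemma~\ref{lem:worst-bound} then yields $\epsilon_\worst\le(\epsilon+\delta)+2^k\sqrt{\delta'}$ directly. The concentration will come from the partial decoupling theorem of \cite{wakakuwa2019}, and the probabilities $p_1,p_2$ from Markov's inequality plus union bounds over the logical basis states and their pairs.

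For (a), I would use the triangle inequality
\[
P(\rho^{x,x},\zeta)\le P(\rho^{x,x},\rho^{x,x}_\avg)+P(\rho^{x,x}_\avg,\zeta)\le P(\rho^{x,x},\rho^{x,x}_\avg)+\epsilon,
\]
so it suffices to show $P(\rho^{x,x},\rho^{x,x}_\avg)\le\delta$. Via \eqref{eq:distances} this reduces to a small $1$-norm deviation, which I would bound by applying partial decoupling with pure input $\Psi^A=|x\>\<x|\otimes|\psi_\alpha\>\<\psi_\alpha|$ (trivial reference $R$) and discard channel $\cT^{A\to E}=\Tr_{n-t}$. The min-entropy $H_{\min}(A^*|E)_\Lambda$ would be computed analogously to Appendix~\ref{app:entropy}, but with the logical state fixed to a basis vector; in the worst case over $x$, it is controlled by the smaller of the two extreme Hamming-weight sector dimensions $\binom{n}{\alpha},\binom{n}{\alpha+k}$, producing the factor $\min\{H(\alpha/n),H((\alpha+k)/n)\}$, with an additive $-t$ penalty from conditioning on the $t$-qubit environment. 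Markov's inequality applied to the expected $1$-norm bound and a union bound over the $2^k$ values of $x$ then give $p_1$.

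For (b), I would first observe that $\rho^{x,x'}_\avg=0$ whenever $x\ne x'$: under block-diagonal Haar averaging, off-charge-block components vanish by independence of Haar measures on distinct Hamming-weight sectors, while within a common block Schur's lemma contributes a multiple of $\<x'|x\>=0$. Thus $\|\rho^{x,x'}\|_1$ is purely a fluctuation, but $|x\>\<x'|$ is not a density matrix. To address this I would invoke the polarization identity
\[
|x\>\<x'|=\tfrac{1}{2}\bigl(|\psi^+\>\<\psi^+|-|\psi^-\>\<\psi^-|\bigr)+\tfrac{i}{2}\bigl(|\phi^+\>\<\phi^+|-|\phi^-\>\<\phi^-|\bigr),
\]
with $|\psi^\pm\>=(|x\>\pm|x'\>)/\sqrt 2$ and $|\phi^\pm\>=(|x\>\pm i|x'\>)/\sqrt 2$. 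The identity $\rho^{x,x'}_\avg=0$ forces the matching cancellations $\rho^{\psi^+}_\avg=\rho^{\psi^-}_\avg$ and $\rho^{\phi^+}_\avg=\rho^{\phi^-}_\avg$, so by the triangle inequality $\|\rho^{x,x'}\|_1$ is bounded by one half the sum of the four fluctuations $\|\rho^s-\rho^s_\avg\|_1$ for $s\in\{\psi^\pm,\phi^\pm\}$, each controllable by partial decoupling applied to a genuine density matrix. These inputs may straddle two Hamming-weight sectors (when $|x|\ne|x'|$), so the min-entropy is driven by the worse of the two sector dimensions, again producing $\min\{H(\alpha/n),H((\alpha+k)/n)\}$ together with a $-t$ correction. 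Markov plus a union bound over the $2^k(2^k-1)$ ordered pairs $(x,x')$ then delivers $p_2$.

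A final union bound combining (a) and (b) completes the proof: on the success event, Lemma~\ref{lem:worst-bound} yields the claimed worst-case bound. The main technical obstacle is the min-entropy computation underpinning both applications of partial decoupling, which must carefully track the direct-sum-product structure imposed by charge conservation in order to extract the binary-entropy factor. The off-diagonal case is slightly more delicate than the diagonal case, because after the polarization reduction the inputs are two-sector superpositions rather than sector-definite pure states, and one must argue that the worst sector dimension still controls $H_{\min}$ --- this is precisely what forces the minimum over $H(\alpha/n)$ and $H((\alpha+k)/n)$ to appear in $p_2$.
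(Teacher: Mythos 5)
Your proposal follows essentially the same route as the paper's proof: the triangle inequality plus partial decoupling and Markov for the diagonal terms, the observation that $\rho^{x,x'}_\avg=0$ together with the polarization identity into the four pure states $(|x\>\pm|x'\>)/\sqrt2$, $(|x\>\pm i|x'\>)/\sqrt2$ for the off-diagonal terms, union bounds over $x$ and over pairs to get $p_1$ and $p_2$, and finally Lemma~\ref{lem:worst-bound}. The min-entropy estimates you describe are exactly those carried out in the paper's Appendix~\ref{app:worst-entropy} (with the additive penalty being $-2t$ rather than $-t$, consistent with the $+t/2$ and $+t$ terms in \eqref{eq:p1p2}), so the argument is correct and matches the paper.
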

\begin{proof}
We use the partial decoupling theorem to find an upper bound for the average distance between $\rho^{x,x'}_\avg$ and $\rho^{x,x'}$. Then by Markov inequality this bounds the probability that $\rho^{x,x'}$ behaves much worse than $\rho^{x,x'}_\avg$. Finally we can show the code has good performance with high probability using a union bound.

Given that there exists $\zeta$ such that $P(\rho^{x,x}_\avg, \zeta) \le \epsilon$ for all $x$, we have
\begin{align*}
    \E_U P(\rho^{x,x}, \zeta) \le& P(\rho^{x,x}_\avg, \zeta) + \E_U P(\rho^{x,x}, \rho^{x,x}_\avg) \\
    \le& \epsilon + \E_U \sqrt{2\|\rho^{x,x}- \rho^{x,x}_\avg\|_1} \\
    \le& \epsilon +  \sqrt{2\E_U\|\rho^{x,x}- \rho^{x,x}_\avg\|_1} \\
    \le& \epsilon + \sqrt{2} \times 2^{-\frac{1}{4}H_{\min}^{x}}.
\end{align*}
where $H_{\min}^{x}$ is $H_{\min }\left(A^{*} \mid R E\right)_{\Lambda(\Psi, \mathcal{T})}$ with the initial state set to be $|x\>\<x|$. By Appendix~\ref{app:worst-entropy},
\[
    H_{\min}^x = nH\left(\frac{|x|+\alpha}{n}\right)-2t+\log(n)\ge n \min\left\{H\left(\frac{\alpha}{n}\right),H\left(\frac{\alpha+k}{n}\right)\right\} - 2t + O(\log n),
\]
which means that for each $x$,
\begin{align}
    & \log \Pr[P(\rho^{x,x},\zeta) \ge \epsilon+\delta] \nonumber \\
    \le & \, \frac{1}{2}-\frac{1}{4}H_{\min}^x + \log\frac{1}{\delta} \nonumber \\
    \le & -\frac{n}{4}\min\left\{H\left(\frac{\alpha}{n}\right),H\left(\frac{\alpha+k}{n}\right)\right\} + \frac{t}{2}+\log\frac{1}{\delta} + O(\log n). \label{eq:prob-x}
\end{align}

When $x\not=x'$, it is easy to see that $\rho^{x,x'}_\avg=\Tr_{n-t}[U(|x\>\<x'| \otimes |\psi_\alpha\>\<\psi_\alpha|)U^\dagger]=0$. Since the partial decoupling theorem only applies to subnormalized states, we need to write $|x\>\<x'|$ using the following relation
\[
    |x\>\<x'| = \frac{1}{2}|\mu_{x,x'}^+\>\<\mu_{x,x'}^+| - \frac{1}{2}|\mu_{x,x'}^-\>\<\mu_{x,x'}^-| + \frac{i}{2}|\nu_{x,x'}^+\>\<\nu_{x,x'}^+| - \frac{i}{2}|\nu_{x,x'}^-\>\<\nu_{x,x'}^-|,
\]
where
\[
    |\mu^\pm_{x,x'}\> = \frac{1}{\sqrt 2}(|x\> \pm |x'\>), \quad|\nu^\pm_{x,x'}\> = \frac{1}{\sqrt 2}(|x\> \pm i|x'\>).
\]
Then we can apply the partial decoupling theorem to the states $|\mu^\pm_{x,x'}\>$ and $|\nu^\pm_{x,x'}\>$, and have
\[
    \E_U \|\rho^{x,x'}\|_1 \le \frac{1}{2}\left(2^{-\frac{1}{2}H_{\min}^{x,x',\mu^+}}+2^{-\frac{1}{2}H_{\min}^{x,x',\mu^-}}+2^{-\frac{1}{2}H_{\min}^{x,x',\nu^+}}+2^{-\frac{1}{2}H_{\min}^{x,x',\nu^-}}\right),
\]
where $H_{\min}^{x,x',\mu^\pm}$ and $H_{\min}^{x,x',\nu^\pm}$ are $H_{\min }\left(A^{*} \mid R E\right)_{\Lambda(\Psi, \mathcal{T})}$ with the initial state set to be $|\mu^\pm_{x,x'}\>$ and $|\nu^\pm_{x,x'}\>$ respectively.
From Appendix~\ref{app:worst-entropy} we know that
\[
    H_{\min}^{x,x'} \ge n \min\left\{H\left(\frac{\alpha}{n}\right),H\left(\frac{\alpha+k}{n}\right)\right\} - 2t + O(\log n),
\]
where $H_{\min}^{x,x'}$ could be any one among $H_{\min}^{x,x',\mu^\pm}$ and $H_{\min}^{x,x',\nu^\pm}$. This gives the bound
\begin{equation}
    \log \Pr[\|\rho^{x,x'}\|_1 \ge \delta'] \le -\frac{n}{2} \min\left\{H\left(\frac{\alpha}{n}\right),H\left(\frac{\alpha+k}{n}\right)\right\} + t + \log\frac{1}{\delta'} + O(\log n). \label{eq:prob-xx}
\end{equation}
Now we can apply union bound and take sum of Eq.~\eqref{eq:prob-x} over all $x$ and Eq.~\eqref{eq:prob-xx} over all $x$ and $x'$. This means that $P(\rho^{x,x},\zeta) \le \epsilon+\delta$ and $\|\rho^{x,x'}\|_1 \le \delta'$ are satisfied for all $x$ and $x'$ with probability at least $1-p_1-p_2$ with $p_1$ and $p_2$ defined in Eq.~\eqref{eq:p1p2}. By Lemma~\ref{lem:worst-bound}, the code satisfies
\[
    \epsilon_\worst \le \epsilon + \delta + 2^k \sqrt{\delta'}.
\]
\end{proof}
\begin{theo}
\label{thm:worst}
\linghang{In the large $n$ limit}, 
when $k$ and $t$ satisfy $kt^2=o(n)$ and $\alpha=an$ with $0<a<1$ being a constant, \zw{the expected worst-case error of the $(n,k,\alpha)$-random code} satisfies \begin{equation}
    \E \epsilon_\worst \le \frac{k\sqrt{t}}{4n\sqrt{a(1-a)}}\left(1+O\left(\frac{kt^2}{n}\right)\right). \label{eq:worst-bound2}
\end{equation}
Furthermore, the probability 
\zw{\zw{that the worst-case error of a $(n,k,\alpha)$-code (with respect to $H_\times$)}  violates the  inequality above is exponentially small, i.e.}
\begin{equation}
    \Pr\left[\epsilon_\worst > \frac{k\sqrt{t}}{4n\sqrt{a(1-a)}}\left(1+O\left(\frac{kt^2}{n}\right)\right)\right]= e^{-\Omega(n)}.
\end{equation}
\end{theo}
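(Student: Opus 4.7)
The plan is to invoke Lemma~\ref{lem:worst-randomized}, which reduces the task to three sub-problems: (i) identify a single state $\zeta$ such that $P(\rho^{x,x}_\avg,\zeta)\le\epsilon$ uniformly over the logical basis states $|x\>$; (ii) choose the slack parameters $\delta,\delta'$ so that the decoupling corrections $\delta+2^k\sqrt{\delta'}$ are subdominant to $\epsilon$; and (iii) check that under those choices the failure probability $p_1+p_2$ from Eq.~\eqref{eq:p1p2} is exponentially small in $n$.

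For task (i), I would use the partial-decoupling averaging formula~\eqref{eq:avg-state} to compute $\rho^{x,x}_\avg$ explicitly. Since $|x\>\otimes|\psi_\alpha\>$ has definite Hamming weight $|x|+\alpha$, its $H_\times$-average is the maximally mixed state on that weight sector of $n$ qubits, and after tracing out $n-t$ qubits one obtains a mixture of maximally mixed states on the weight-$j$ sectors of the remaining $t$ qubits, weighted by a hypergeometric distribution with parameters $(n,|x|+\alpha,t)$. A natural choice is to let $\zeta$ be the reduced state corresponding to the central input weight $\alpha+k/2$ (this is essentially the Choi-averaged reduced state handled in Appendix~\ref{app:average-state}, up to a lower-order correction). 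The purified distance $P(\rho^{x,x}_\avg,\zeta)$ then reduces to a comparison between two such hypergeometric distributions whose means differ by at most $kt/(2n)$; since each has standard deviation of order $\sqrt{ta(1-a)}$, this yields an $\epsilon$ of order $\frac{k\sqrt{t}}{n\sqrt{a(1-a)}}$. Carrying out the computation in parallel with Appendix~\ref{app:average-state} but with the shift $\alpha\to\alpha+|x|$ should produce the precise prefactor $1/4$ and the multiplicative correction $(1+O(kt^2/n))$. The extra factor of $\sqrt{k}$ relative to the Choi bound of Theorem~\ref{thm:choi} has a clear origin: for the Choi case one effectively averages $|x|$ over a distribution with standard deviation $O(\sqrt{k})$, while here we must accommodate the worst-case deviation $|x|-k/2$, which can be as large as $k/2$.

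For task (ii), I would take $\delta=\delta'=2^{-cn}$ for a small constant $c>0$; then $\delta+2^k\sqrt{\delta'}=2^{-\Omega(n)}$ is exponentially suppressed compared to the polynomially small $\epsilon$ whenever $k=o(n)$. For task (iii), inserting these choices into Eq.~\eqref{eq:p1p2} and using $H(a)>0$ together with $H((\alpha+k)/n)=H(a)+O(k/n)>0$, the leading $-\frac{n}{2}\min\{H(\alpha/n),H((\alpha+k)/n)\}$ term in $\log p_2$ dominates over $2k+t+cn+O(\log n)$ once $c$ is chosen small enough; a parallel estimate works for $p_1$. Hence $p_1+p_2=e^{-\Omega(n)}$.

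Combining these three ingredients via Lemma~\ref{lem:worst-randomized} yields the stated high-probability bound $\epsilon_\worst \le \frac{k\sqrt{t}}{4n\sqrt{a(1-a)}}(1+O(kt^2/n))$. The bound on $\E\epsilon_\worst$ then follows because $\epsilon_\worst\in[0,1]$ always, so the exceptional event of probability $e^{-\Omega(n)}$ contributes at most an exponentially small amount to the expectation, easily absorbable into the $(1+O(kt^2/n))$ correction. The main technical obstacle will be task (i): performing the purified-distance computation uniformly over $x\in\{0,1\}^k$ with the sharp constant $1/(4\sqrt{a(1-a)})$, and verifying that the multiplicative error is truly $O(kt^2/n)$ rather than something larger, all while remaining careful with the boundary effects of the hypergeometric distribution near the extremes of $|x|$.
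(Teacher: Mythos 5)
Your proposal follows essentially the same route as the paper: invoke Lemma~\ref{lem:worst-randomized} with exponentially small $\delta,\delta'$, bound $P(\rho^{x,x}_\avg,\zeta)$ uniformly in $x$ by redoing the Appendix~\ref{app:average-state} computation with $j=|x|$ fixed rather than binomially averaged (the paper takes $\zeta$ to be the $x$-average of $\rho^{x,x}_\avg$, which coincides with the Choi-averaged reduced state and differs from your central-weight choice only at lower order), and pass to the expectation using $\epsilon_\worst\le 1$. Your hypergeometric/Gaussian heuristic correctly reproduces the constant $1/(4\sqrt{a(1-a)})$ and correctly identifies the worst-case-versus-average origin of the extra $\sqrt{k}$, so this matches the paper's argument.
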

\begin{proof}
In Appendix~\ref{app:worst-avg}, we can show that $P(\rho^{x,x}_\avg, \zeta)$ is upper bounded by the right hand side of Eq.~\eqref{eq:worst-bound2}. Now we apply Lemma~\ref{lem:worst-randomized} with properly chosen exponentially small $\delta$ and $\delta'$,so that $p_1$ and $p_2$ are also exponentially small, which shows that the code satisfies $\epsilon_\worst = O(n^{-1})$ \zw{with exponentially small failure probability}. \linghang{Since $\epsilon_\worst$ is at most 1, this exponentially probability of violation implies that the expectation of $\epsilon_\worst$ satisfies the inequality as well.}
\end{proof}

\subsection{Remarks on noise and charge}
\label{sec:remarks}

We have shown in Thm.~\ref{thm:choi} and Thm.~\ref{thm:worst} that when $\alpha=an$, $0<a<1$, the $(n,k,\alpha)$-random code satisfies
\[
    \epsilon_\choi \le  \frac{\sqrt{tk}}{4n\sqrt{a(1-a)}}\left(1+O\left(\frac{k^2t^2}{n}\right)\right),\quad
    \epsilon_\worst \le \frac{k\sqrt{t}}{4n\sqrt{a(1-a)}}\left(1+O\left(\frac{kt^2}{n}\right)\right),
\]
with high probability against erasure of $t$ qubits, where the probability of failure (greater error) is exponentially small in $n$.
Note that although the above analyses are done with respect to the erasure of $t$ specific qubits, the bounds hold for the more general case of the erasure of any combination of $t$ qubits with some probability. To show this, first note that the purified distance is independent of the choice of $t$ qubits, because the permutation of the qubits commutes with the Hamming weight operator and thus could be absorbed into the Haar measure. Then a union bound could be used over all choices of $t$ qubits, which \zw{at most amplifies} the failure probability by a factor of $\binom{n}{t}$. Since $t=o(n)$ is obviously needed for the bounds to give meaningful result, the failure probability is still exponentially small as $\log\binom{n}{t}=o(n)$.

Instead of erasing $t$ out of the $n$ qubits, a natural and stronger model of erasure noise is to have each qubit erased with some independent probability. Our analysis can be easily applied to this model by combining our bounds with the distribution on the number of qubits erased.

It is also interesting to note that the code performance depends on the charge of the input ancilla state in our construction.  In particular, $a=1/2$ (namely $\alpha=n/2$) gives rise to the best accuracy, and the accuracy becomes worse as one increases or decreases $a$. The intuition is that  the code resides in a subspace with Hamming weight between $\alpha$ and $\alpha+k$. When $\alpha=n/2$ (note that $k=o(n)$), the full Hilbert space with Hamming weight between $\alpha$ and $\alpha+k$ is the largest and apparently the most entangled, which enhances the performance of the code.

\zw{
How does our approach apply to the case without charge conservation laws? 
There, our construction is modified by replacing the charge-conserving Haar-random unitary by a fully Haar random one in $SU(2^n)$. As long as the quantum singleton bound $n-k \ge 4t$ is satisfied, the code obtained normally has an error rate exponentially small in $n$, in contrast to polynomial small for the case with charge conservation. 
To be more explicit, let $\Delta = n-k-4t$, then the randomly generated code will have expected Choi error $e^{-\Omega(\Delta)}$. This could be shown using an analysis similar to the case with symmetry constraint, and the main difference is that the term relevant to the average state (i.e.\ the second term in Eq.~\eqref{eq:bound}) is naturally zero in this case and the error solely comes from the decoupling bound. 
}
{This comparison also gives an intuition for the Eastin-Knill theorem and the lower bounds for covariant codes from a mathematical perspective.}

\zw{
Another interesting thing to note is the dependence of error scaling on charge scaling as well as the choice of distance measure.  Here we discuss the Choi case.
As we have shown, the first term of Eq.~(\ref{eq:bound}) is exponentially small in $n$ for linear $\alpha$. When $\alpha$ is constant, following the calculation in Appendix~\ref{app:entropy} the relative entropy is at least $\alpha\log n$, so the second term is dominant and determines the overall distance. Now the second term with $\zeta$ set as $\Tr_R\Tr_{n-t}\Phi_\avg$ behaves significantly differently in 1-norm and purified distance for constant charge. According to our numerical results as illustrated in Fig.~\ref{fig:distance}, the purified distance scales worse for constant $\alpha$ than for linear $\alpha$. To be more precise, by a linear fitting shown in Table~\ref{tab:slope} it can be seen that the error measured by purified distance scales roughly like $n^{-1/2}$ when $\alpha = O(1)$ and $n^{-1}$ when $\alpha= O(n)$. On the other hand, the error measured by trace distance always scales like $n^{-1}$. The two cases match the two extremes in Eq.~\eqref{eq:distances}.} 
\linghang{These numerical results are consistent with our calculation in Appendix~\ref{app:average-state}.}

\begin{figure}
    \centering
    \includegraphics[width=0.49\textwidth]{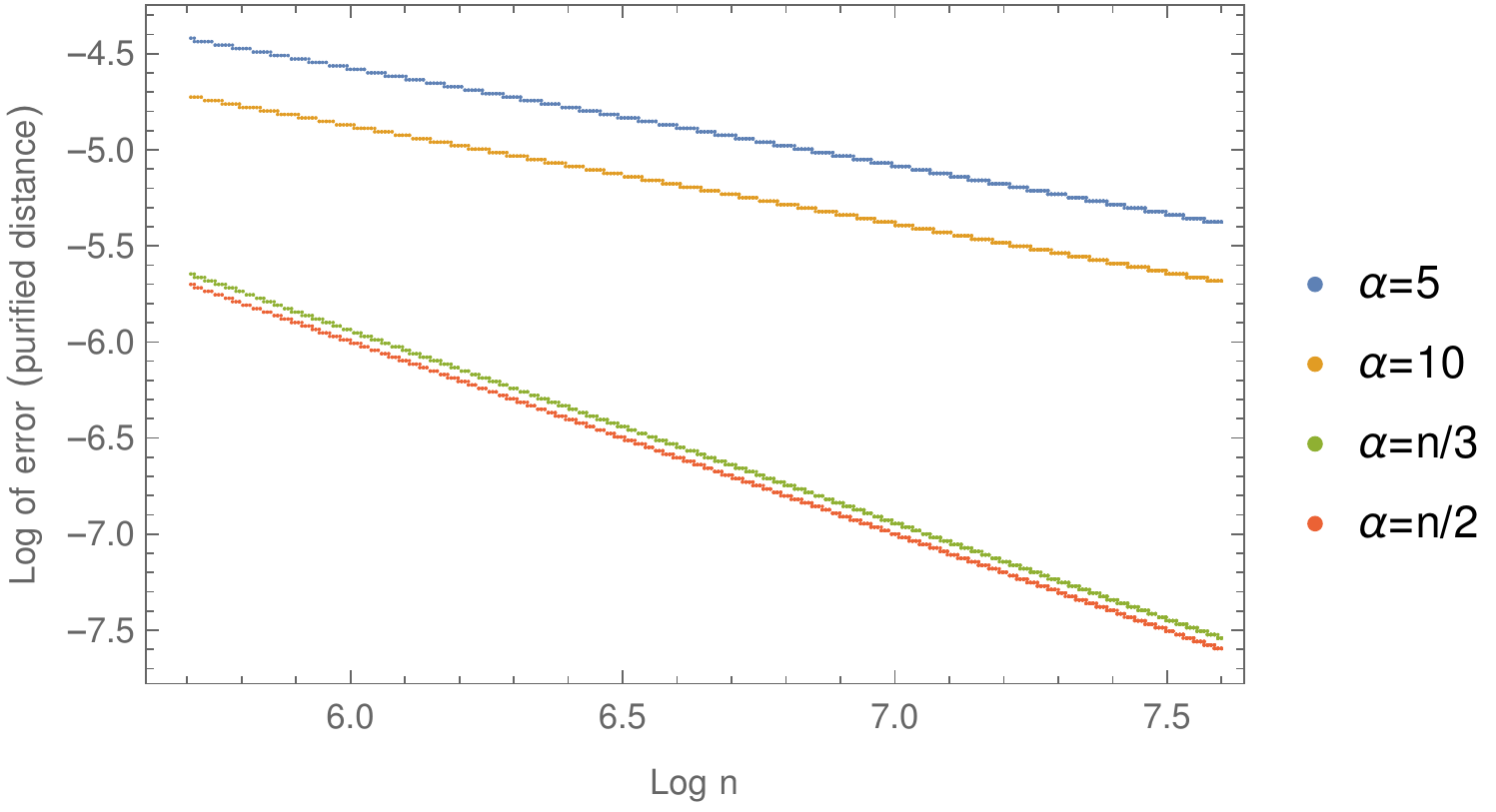}
    \includegraphics[width=0.49\textwidth]{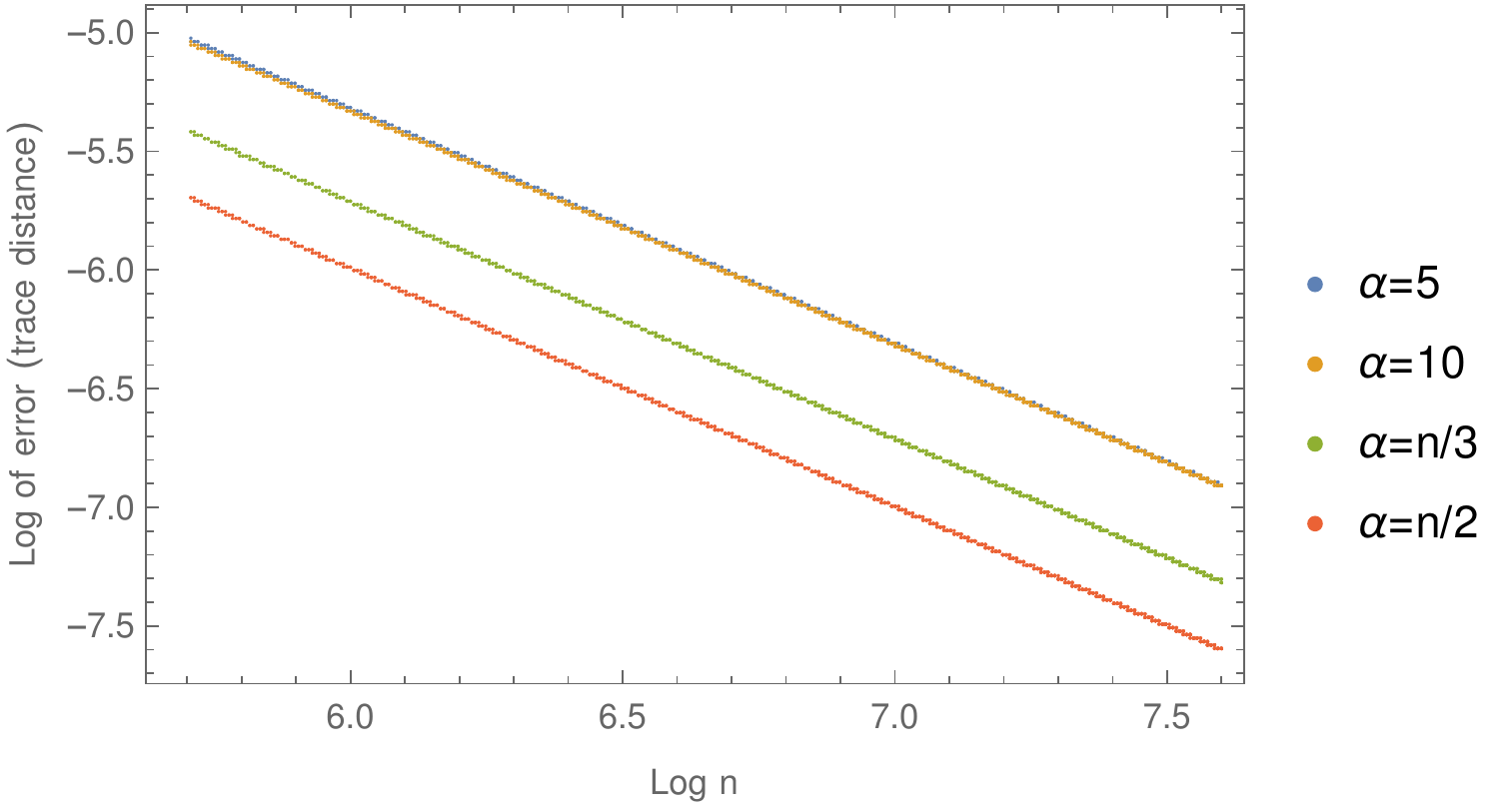}
    \caption{Log-log plot of \zw{the Choi error of $(n,k,\alpha)$-random code as measured by 1-norm distance and purified distance, given different $\alpha$; here we set $k=t=2$.} 
    }
    \label{fig:distance}
\end{figure}
\begin{table}
\centering
\begin{tabular}{c|c|c|c|c}
    \hline
    $\alpha$ & 1 & 5 & $n/3$ & $n/2$\\
    \hline
    1-norm distance & -0.998743 & -0.993676 & -0.999996 & -1.00313 \\
    \hline
    purified distance & -0.503147 & -0.504438 & -1.00001 & -1.00063 \\
    \hline
\end{tabular}
\caption{The slopes for the lines in Fig.~\ref{fig:distance}.}
\label{tab:slope}
\end{table}

\section{Comparisons with fundamental limits}
\label{sec:comparison}

Now let us compare the performance of our $(n,k,\alpha)$-codes with the fundamental limits of covariant codes.
For simplicity, consider the $t=1$ case, namely when one qubit is erased.
For $U(1)$ symmetry, the Thm.~1 of \cite{faist2019} indicate the following lower bounds:
\begin{equation}
    \epsilon_\choi \ge \frac{\binom{k}{\lceil k/2 \rceil}\lceil k/2 \rceil}{2^k n},\quad \epsilon_\worst \ge \frac{k}{2n}. \label{eq:lower-bounds}
\end{equation}
If $k$ is large, the bound on $\epsilon_\choi$ {approaches}
\[
    \epsilon_\choi \ge \frac{1}{n}\sqrt{\frac{k}{2\pi}}.
\]
In comparison, from Thm.~\ref{thm:choi} and Thm.~\ref{thm:worst}, our $(n,k,\alpha)$-random code has smallest error when $a=1/2$, and in that case
\[
    \epsilon_\choi \le \frac{\sqrt{k}}{2n}(1+O(k^2/n)), \quad \epsilon_\worst \le \frac{k}{2n}(1+O(k/n)),
\]
so {up to leading order}, the worst-case distance exactly matches the lower bound, and the Choi distance matches the bound up to a constant. When $k=1$, the Choi purified distance also matches the bound in Eq.~\eqref{eq:lower-bounds} exactly.

The situation for the general $t>1$  case is as follows.
Thm.~2  in Ref.~\cite{faist2019} gave a bound for general erasure: If the physical charge operator has the form
\[
    T_A = \sum_\alpha T_\alpha,
\]
where $T_\alpha$ has support on a set of qubits $\alpha$, and that $\alpha$ gets erased with probability $q_\alpha$, then
\begin{equation}\label{eq:t_bound}
    \epsilon_\choi \ge \frac{\left\|T_{L}-\mu(T_{L}) I_{L}\right\|_{1} / d_{L}}{\max _{\alpha}\left(\Delta T_{\alpha} / q_{\alpha}\right)},\quad \epsilon_\worst \ge \frac{\Delta T_{L} }{2\max _{\alpha}\left(\Delta T_{\alpha} / q_{\alpha}\right)},
\end{equation}
where $\Delta T_\alpha$ and $\Delta T_L$ are the difference between the largest and smallest eigenvalues of $T_\alpha$ and $T_L$ respectively, and $\mu(T_L)$ is the median of the eigenvalues of $T_L$.
For example, consider the following two ways to model the erasure of $t$ qubits:
\begin{enumerate}
    \item The qubits are grouped into sets of size $t$, and there are $n/t$ such sets. Each $T_\alpha$ is the Hamming weight operator on this set, and $\Delta T_\alpha = t, q_\alpha = t/n$;
    \item $\alpha$ represent all possible sets of $t$ qubits. $q_\alpha=\binom{n}{t}^{-1}$, and each $T_\alpha$ is $\frac{n}{t}\times \binom{n}{t}^{-1}$ the Hamming weight operator on the set, where the coefficient is chosen so that $T_\alpha$ sum up to the physical charge operate $T_A$. In this case $\Delta T_\alpha = n\times \binom{n}{t}^{-1}, q_\alpha=\binom{n}{t}^{-1}$.
\end{enumerate}
In either case, it turns out that
\[
    \max_\alpha \left(\Delta T_{\alpha} / q_{\alpha}\right) = n.
\]
As a result, the lower bounds Eq.~(\ref{eq:t_bound}) do not scale with $t$ and are expected to be loose. 
\zw{We leave potential improvement of the lower bounds in \cite{faist2019} as well as more careful investigation into different methods in e.g.~\cite{Woods2020continuousgroupsof,YangWoods,KubicaD,zlj20} for the $t>1$ case for future work.}

\section{On designs and random circuits with conservation laws}
\label{sec:designs}
\zw{In the above, we considered Haar random unitaries with charge conservation, for which one important motivation is to understand the typical performance of all such unitaries.     A natural follow-up question of both practical and mathematical interest is how well the  results  hold for more ``efficient'' versions of random unitaries such as $t$-designs (``pseudorandom'' distributions that match the Haar measure up to $t$ moments) and random quantum circuits (circuits composed of random local gates).  Consider the case without symmetries---it is known that the decoupling and error correction properties of Haar random unitaries hold for (approximate) 2-designs \cite{dupuis2014}, and that random circuits converge to $t$-designs in depth polynomial in $t$ and $n$ \cite{harrow2009,brandao2016}, which imply that random circuits can provide rather efficient constructions of good codes (see also \cite{BrownFawzi:decoupling,Brown13shortrandom}).    Do similar conclusions hold for the case with charge conservation?

}
In our analysis, the Haar randomness has only been used in the partial decoupling theorem, and as was noted in \cite{wakakuwa2019}, 
\zw{symmetric 2-designs 
are sufficient for  the partial decoupling bounds to hold. Therefore, all our bounds hold for 2-designs with charge conservation. 
}
Although 2-designs 
for the full unitary group $SU(2^n)$ have been widely studied \cite{harrow2009,brandao2016,cleve2015,nakata2017}, little is known about 2-designs (let alone higher order designs) with symmetry constraints, as is in our case. 
\zw{Especially for the fundamental problem of convergence of random circuits to designs, we note a few interesting differences.}
Repeated applications of 2-qubit charge-conserving random unitary gates may converge to 2-designs that we need, as was in \cite{harrow2009,brandao2016} for the no-symmetry case, but the proof techniques there might be difficult to be adapted to this problem -- negative values could appear in the operator basis, making the Markov chain analysis difficult; The conversion into Hamiltonian spectrum problem in \cite{brandao2016} does not work here either, due to the complicated structure of the eigenspaces of Hamming weight operator. \zw{To summarize, there seem to be nontrivial obstacles to adapting previous proofs of convergence of random circuits to the case with charge conservation, and it remains open how to efficiently construct 2-designs with charge  conservation.  }
\zw{On the other hand, it is recently shown that \cite{Marvian} in the presence of continuous symmetries like $U(1)$, the group of unitaries generated by local symmetric gates is a proper subgroup of the group of  global symmetric ones.  As a result, under charge conservation, local random circuits cannot converge to the Haar measure, and it remains to be further studied whether they converge to certain $t$-designs.    For the weaker error correction property, we conjecture that
random circuits composed of charge-conserving local gates are able to approach the near-optimal performance of Haar random unitaries derived in Thm.~\ref{thm:choi} and Thm.~\ref{thm:worst} with an efficiency similar to the no-symmetry case, that is, $\widetilde{O}(n)$ gates or $O(\mathrm{polylog}(n)$) depth (for circuit architecture without geometries). 
}

\section{Discussion and outlook}
\label{sec:discussion}

In this work we looked into $U(1)$ covariant codes generated by Haar random unitaries, and proved that with overwhelming probability, the error rate of such codes as measured by Choi and worst-case purified distances scales as $O(n^{-1})$ and exactly saturates the lower bounds given in \cite{faist2019} at leading order in standard cases, and thus such random codes almost always have the best performance allowed in the presence of certain conservation laws.

\zw{We expect our results and methods to find connections or implications to several important problems in physics as hinted in the introduction, as it is often crucial to take charge or energy conservation laws into account in physical scenarios.  
We shall leave detailed explorations for future work, but below we make preliminary remarks on the possible relevance to a few specific directions and point out some references:
\begin{itemize}
    \item Black hole information problem and Hayden--Preskill thought experiment.  The Hayden--Preskill thought experiment  \cite{hayden2007} is a model of quantum information retrieval from black hole radiation based on scrambling dynamics modeled by e.g.~random circuits, which has stimulated many key developments in our understanding of black hole information problem and quantum gravity. The basic correspondences to our error correction setup is as follows: 
    The logical system corresponds to the quantum message thrown into the black hole, and the ancilla state corresponds to a pure state black hole (that has not evaporated) with fixed charge. Now the black hole dynamics subject to charge or energy conservation is modeled by our $U(1)$-covariant random code, the error of which  against erasing the remaining black hole characterizes how well the message can be recovered.  Note that our current analysis, especially the optimality arguments, are mostly for the regime of relatively small $t$ (size of erasure), so to understand the connections to Hayden--Preskill it could be important to further look into the large $t$ regime. 
    It is also discussed in \cite{liu2020} that the performance of Hayden--Preskill with charge conservation has possible conceptual connections to the weak gravity conjecture \cite{Arkani_Hamed_2007}---it is possible that our analysis leads to useful quantitative statements.
    We refer readers to e.g.~\cite{Yoshida:softmode,liu2020} for more background and discussions on Hayden--Preskill with conservation laws.
\item Scrambling and entanglement in many-body physics.  Random unitaries and circuits have also drawn considerable interest in condensed matter physics in recent years as ``solvable'' models of chaotic dynamics, leading to highly active fields like entanglement or operator spreading \cite{Nahum2,Nahum1} and measurement-induced entanglement transition \cite{PhysRevX.9.031009,PhysRevB.100.134306,PhysRevB.99.224307}.   Here conservation laws lead to diffusive transport of the conserved quantities, and as shown in e.g.~\cite{PhysRevX.8.031057}, the laws of scrambling (such as operator spreading) could be fundamentally different from the case without conservation laws.  On the other hand, the measurement-induced phase transition in random circuits can be understood from a quantum error correction perspective \cite{ChoiBaoQiAltman}, indicating further connections between quantum error correction and phases of matter. 
Given the importance of random unitaries, symmetries and quantum error correction in these areas, we would hope to further explore the implications of our rigorous understanding of the interplay between them.
\end{itemize}
As mentioned, symmetries and quantum error correction are key notions in holographic quantum gravity as well.  In particular, the famous conjecture about quantum gravity that it does not allow global symmetries is recently argued in the positive in AdS/CFT based on the quantum error correction formulation of AdS/CFT \cite{harlow2018,harlow2019}, and the argument can indeed be understood from the limitations of covariant codes \cite{faist2019}.  It could also be fruitful to further explore the applications of covariant code results and techniques there.  
}

\zw{
Another future direction is to consider more general symmetries, especially $SU(d)$, given its link to fault-tolerant quantum computing \cite{eastin2009,faist2019}. 
It seems possible to generalize our method to $SU(d)$, as the Hilbert space of $n$ qudits takes the ``direct-sum-product'' structure under the representation $U\to U^{\otimes n}$, so the partial decoupling theorem still applies. However, it remains to be calculated in detail how good the performance of the random code is. 
}

\zw{
With our work, we hope to stimulate further study into random unitaries and circuits with symmetries, which, as discussed earlier, exhibit many interesting distinctions from the no-symmetry case.  In this work we considered random global unitaries, but it would be important to further study random circuits since they can capture the ``complexity'' of the construction as well as the locality structure that is important in physical scenarios.     As a general program, it would be interesting to better understand the relations between Haar random unitaries, $t$-designs, and random local circuits (with various architectures), in the presence of $U(1)$ or other symmetries, through various kinds of physical properties and measures.  Here we take a first step by analyzing the error correction performance of random unitaries with charge conservation, and conjectured that it holds for low-depth random circuits.   As discussed there are difficulties in fully understanding the convergence of symmetric random circuits to designs, but it could already be interesting and useful to look into the behaviors of ``measures'' of scrambling and randomness, such as frame potentials \cite{Scott_2008,2016arXiv160908172Z,RobertsYoshida}, out-of-time-order correlators \cite{HQRY16,RobertsYoshida}, R\'enyi entanglement entropies \cite{HQRY16,LLZZ18,PhysRevLett.120.130502}, which are widely used in physics and quantum information.
}

\begin{comment}
\section{Notes}
\begin{itemize}
    \item Say a bit more about $\alpha$
    \item Symmetric 2-design bound, partial  decoupling  theorem
    \item Concentration type bounds
    \item Verify symmetry issue
    \item {\color{red} Maybe some discussions about 1-norm}
\end{itemize}
\end{comment}

\section*{Acknowledgements}
We thank Daniel Gottesman, Aram Harrow, Sirui Lu, Beni Yoshida, Sisi Zhou for useful discussions and feedback.   %LK...
LK is supported by the ARO grant Contract Number W911NF-12-0486. ZWL is supported by Perimeter Institute for Theoretical Physics.
Research at Perimeter Institute is supported in part by the Government of Canada through the Department of Innovation, Science and Economic Development Canada and by the Province of Ontario through the Ministry of Colleges and Universities.

\appendix
\section{Conditional min-entropy bounds for Choi error}
\label{app:entropy}
We first restate the partial decoupling theorem adapted to our structure of the space, which corresponds to $l_j=1$ and $r_j = \binom{n}{j}$ for all $0\le j \le n$ in \cite{wakakuwa2019}. We denote by $\Pi_j$ the projector into the subspace with Hamming weight $j$.
\begin{lemma}[Partial Decoupling]
Let $\mathcal{T}^{A\to E}$ be any channel mapping system $A$ to system $E$, and let $\Psi^{AR}$ be any joint state of system $A$ and $R$. We have
% \mathbb{E}_{U \sim \mathrm{H}_{\mathrm{x}}}\left[\left\|\mathcal{T}^{A \rightarrow E} \circ \mathcal{U}^{A}\left(\Psi^{A R}\right)-\mathcal{T}^{A \rightarrow E}\left(\Psi_{\mathrm{av}}^{A R}\right)\right\|_{1}\right] \leq 2^{-\frac{1}{2} H_{\min }\left(A^{*} \mid R E\right)_{\Lambda(\Psi, \mathcal{T})}}
\[
    \mathbb{E}_{U \sim H_{\mathrm{x}}}\left[\left\|\mathcal{T}^{A \rightarrow E} \circ \mathcal{U}^{A}\left(\Psi^{A R}\right)-\mathcal{T}^{A \rightarrow E}\left(\Psi_{\avg}^{A R}\right)\right\|_{1}\right] \leq 2^{-\frac{1}{2} H_{\min }\left(A^{*} \mid R E\right)_{\Lambda(\Psi, \mathcal{T})}},
\]
where
\[
    \Psi_avg^{A R} = \E_{U\sim H_\times} U \Psi^{A R} U^\dagger.
\]
The state $\Lambda(\Psi, \mathcal{T})$ is defined as
\[
    \Lambda(\Psi, \mathcal{T}) = F(\Psi^{AR} \otimes \tau^{\bar A E})F^\dagger
\]
where $\tau^{\bar A E}$ is the Choi-Jamiolkowski state of $\mathcal{T}$ and the operator $F^{A\bar A \to A^*}$ is
\[
    F^{A \bar{A} \rightarrow A^{*}}:=\bigoplus_{j=0}^n \sqrt{\frac{2^n }{\binom{n}{j}}}\left(\Pi_{j}^{A} \otimes \Pi_{j}^{\bar{A}}\right).
\]
\end{lemma}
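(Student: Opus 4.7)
The plan is to obtain the stated inequality as an immediate specialization of the general partial decoupling theorem recalled in Section~\ref{subsec:min-entropy}, or rather Section 2.4, to the concrete direct-sum-product structure induced by Hamming-weight conservation on $n$ qubits. The proof therefore amounts to identifying the parameters $l_j$, $r_j$, the measure $H_\times$, and the operator $F$ in this specific setting; no new analytic content is needed beyond the general theorem.

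First I would observe that an $n$-qubit unitary commutes with $Q^{(n)}$ if and only if it preserves each Hamming weight eigenspace, so the $n$-qubit Hilbert space splits as $\H^A = \bigoplus_{j=0}^{n} \H_j$ with $\dim \H_j = \binom{n}{j}$, and the group of charge-conserving unitaries is precisely $\prod_{j=0}^n U(\binom{n}{j})$ acting block-diagonally. This fits the general direct-sum-product framework $\H^A = \bigoplus_j \H_j^{A_l} \otimes \H_j^{A_r}$ with the trivial choice $l_j = 1$ (so $\H_j^{A_l} \cong \mathbb{C}$ and $\H_j^{A_r} \cong \H_j$) and $r_j = \binom{n}{j}$. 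The distribution $H_\times$ of the general theorem, being the product of Haar measures on each block $U_j^{A_r}$, then coincides exactly with the Haar measure on charge-conserving unitaries used throughout this paper, and $\Pi_j^A$ is identified with the projector onto the Hamming weight $j$ subspace.

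Second, I would substitute these parameters into the general formula for $F^{A\bar A \to A^*}$. Because $l_j = 1$, the factor $\langle \Phi_j^l|^{A_l \bar A_l}$ is an inner product between one-dimensional ``maximally entangled'' states and acts as a trivial scalar (conventionally $1$), so only the projectors $\Pi_j^A \otimes \Pi_j^{\bar A}$ survive; and the prefactor simplifies to $\sqrt{d_A l_j / r_j} = \sqrt{2^n / \binom{n}{j}}$, reproducing the displayed formula verbatim. The definitions of $\Lambda(\Psi,\mathcal{T})$ and $\Psi^{AR}_{\avg}$ transcribe unchanged from the general statement, the latter collapsing to $\Psi^{AR}_\avg = \bigoplus_j \Tr_{A_r}[\Pi_j^A \Psi^{AR} \Pi_j^A] \otimes \Pi_j^{A}/\binom{n}{j}$ via Eq.~\eqref{eq:avg-state}.

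Invoking the general partial decoupling bound with this specialization then yields the stated inequality. There is essentially no analytic obstacle here; the exercise is bookkeeping. The only items requiring genuine verification are (i) that the charge-conserving unitary group factorizes into the product of block Haar measures required by the theorem, and (ii) that the appropriate projectors line up; both are immediate from the block-diagonal characterization of charge conservation. The real work, carried out in the rest of the appendix, is then to lower bound the conditional min-entropy $H_{\min}(A^*|RE)_{\Lambda(\Psi,\mathcal{T})}$ for the specific states and channels arising in the Choi error calculation, but that is separate from establishing this restated lemma.
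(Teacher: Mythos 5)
Your proposal is correct and matches what the paper does: the lemma is stated in Appendix~A precisely as the specialization of the general partial decoupling theorem of Wakakuwa--Nakata to the Hamming-weight decomposition with $l_j=1$, $r_j=\binom{n}{j}$, $d_A=2^n$, and your bookkeeping (the trivial $\langle\Phi_j^l|$ factor, the identification of $H_\times$ with the block-diagonal Haar measure, and the collapse of $\Psi^{AR}_{\avg}$ via Eq.~\eqref{eq:avg-state}) is exactly the verification implicit in the paper's restatement.
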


For simplicity we define the $2m$-qubit normalized state
\[
    |\phi_i^{(m)}\> = \binom{m}{i}^{-1/2} \sum_{v \in \{0,1\}^m,\,|v|=i}|v\>|v\>,
\]
which is the maximally entangled state between two copies of subspaces with Hamming weight $i$ of $m$ qubits. We also define $|\hat\phi^{(m)}\>$ as $m$ EPR pairs.

In our setting the state $\Psi^{AR}$ is the encoded state before applying the random unitary, which is $k$ EPR pairs appended by the fixed state $|\psi\>$,
\[
    \Psi^{AR} = |\Psi^{AR}\>\<\Psi^{AR}|,\quad |\Psi^{AR}\> =  |\hat \phi\>^{A_1 R} \otimes |\psi_\alpha\>^{A_2}.
\]
$A_1$ and $A_2$ refers to different parts of $A$, and have $k$ and $n-k$ qubits each. Then it is easy to see that
\[
    \Pi_j^A |\Psi\>^{AR} = \sqrt{\frac{\binom{k}{j-\alpha}}{2^k}}|\phi_{j-\alpha}^{(k)}\>^{A_1R}|\psi_\alpha\>^{A_2}.
\]

The channel $\mathcal T$ traces over $n-t$ qubits, so the corresponding Choi-Jamiolkowski is
\[
    \tau^{\bar A E} = \Tr_{n-t} |\hat\phi^{(n)}\>\<\hat\phi^{(n)}| = |\hat\phi^{(t)}\>\<\hat\phi^{(t)}|^{\bar A_1 E} \otimes \frac{I^{\bar A_2}}{2^{n-t}}.
\]
Let $\Pi^{(a)}_b$ be the subspace on $a$ qubits with Hamming weight $b$. We have
\begin{align}
    \Pi_j^{\bar A}\tau^{\bar A E}\Pi_{j'}^{\bar A} = & \frac{1}{2^{n-t}}\sum_i\Pi_j^{\bar A}\left[|\hat\phi^{(t)}\>\<\hat\phi^{(t)}|^{\bar A_1 E} (\Pi^{(n-t)}_i)^{\bar A_2}\right] \Pi_{j'}^{\bar A} \nonumber \\
    =&\frac{1}{2^{n}}\sum_i \sqrt{\binom{t}{j-i}\binom{t}{j'-i}} |\phi_{j-i}^{(t)}\>\<\phi_{j'-i}^{(t)}|^{\bar A_1 E} (\Pi^{(n-t)}_i)^{\bar A_2}, \label{eq:proj}
\end{align}
and therefore
\begin{align*}
    \Lambda(\Psi, \mathcal{T}) =& \frac{1}{2^{k}}\sum_{i,j,j'} \sqrt{\frac{\binom{t}{j-i}\binom{t}{j'-i}\binom{k}{j-\alpha}\binom{k}{j'-\alpha}}{\binom{n}{j}\binom{n}{j'}}} |\phi_{j-i}^{(t)}\>\<\phi_{j'-i}^{(t)}|^{\bar A_1 E} (\Pi^{(n-t)}_i)^{\bar A_2}|\phi_{j-\alpha}^{(k)}\>\<\phi_{j'-\alpha}^{(k)}|^{A_1R}|\psi_\alpha\>\<\psi_\alpha|^{A_2} \\
    =&  \sum_{i,j,j'} (\Pi^{(n-t)}_i)^{\bar A_2} \otimes |\gamma_{j,i}\>\<\gamma_{j',i}|^{\bar A_1 A_1A_2ER} \\
    =& \sum_i (\Pi^{(n-t)}_i)^{\bar A_2} \otimes |\Gamma_i\>\<\Gamma_i|^{\bar A_1 A_1A_2ER}
\end{align*}
where
\begin{align*}
    |\Gamma_i\> =& \sum_j |\gamma_{j,i}\>, \\
    |\gamma_{j,i}\>^{\bar A_1 A_1A_2ER} =& \sqrt{\frac{\binom{t}{j-i}\binom{k}{j-\alpha}}{2^k\binom{n}{j}}}|\phi_{j-i}^{(t)}\>^{\bar A_1 E}|\phi_{j-\alpha}^{(k)}\>^{A_1R}|\psi_\alpha\>^{A_2}.
\end{align*}

As mentioned in Section~\ref{subsec:min-entropy}, the min conditional entropy is defined as
\[
    H_{\min}(P|Q)_\rho = \sup_{\sigma\ge 0, \Tr\sigma=1}\sup\{\lambda\in \R| 2^{-\lambda}I^P\otimes \sigma^Q \ge \rho^{PQ}\},
\]
or equivalently, $H_{\min}(P|Q)_\rho=-\log s$ where $s$ is the optimum value of the following SDP
\begin{equation}
    s=\inf \Tr\sigma, \quad \text{s.t. }I^P\otimes \sigma^Q \ge \rho^{PQ},\,\sigma\ge0. \label{eq:primal}
\end{equation}
The corresponding dual SDP is
\begin{equation}
    t=\sup \<\rho^{PQ},y^{PQ}\>,\quad \text{s.t. }\Tr_P[y^{PQ}]\le I_Q,\,y\ge 0. \label{eq:dual}
\end{equation}
It is obvious that both primal and dual are strongly feasible, so $s=t$. We can use the following lemma to relate the min-entropy of $\Lambda(\Psi, \mathcal{T})$ to the min-entropy of each $|\Gamma_i\>$.
\begin{lemma}
\label{lem:sum}
Suppose the register $P$ in eqs.~\eqref{eq:primal}\eqref{eq:dual} can be divided into $P_1$ and $P_2$, and the state $\rho^{PQ}$ has the structure
\[
    \rho^{PQ} = \sum_{i=1}^m \Pi_i^{P_1} \otimes \rho_i^{P_2Q}
\]
where $\Pi_i$ are projectors into disjoint subspaces. Then $s$, the result of the SDP, satisfy 
\[
\frac{1}{m}\sum_i s_i \le  s \le \sum_i s_i
\]
where $s_i$ is the result for the SDP of $\rho_i$.
\end{lemma}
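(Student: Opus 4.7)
The plan is to prove the two inequalities separately, using the primal formulation \eqref{eq:primal} for the upper bound and the dual formulation \eqref{eq:dual} for the lower bound, exploiting orthogonality of the projectors $\Pi_i^{P_1}$ throughout.

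For the upper bound $s \le \sum_i s_i$, I would take optimal primal solutions $\sigma_i$ for each sub-SDP (so $\Tr\sigma_i = s_i$ and $I^{P_2}\otimes \sigma_i^Q \ge \rho_i^{P_2Q}$) and propose the candidate $\sigma = \sum_i \sigma_i$. Feasibility follows by sandwiching: for each $j$, since $I^{P_2}\otimes \sigma^Q \ge I^{P_2}\otimes \sigma_j^Q \ge \rho_j^{P_2Q}$, tensoring with $\Pi_j^{P_1}$ gives $\Pi_j^{P_1}\otimes I^{P_2}\otimes \sigma^Q \ge \Pi_j^{P_1}\otimes \rho_j^{P_2Q}$. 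Summing over $j$ and using $\sum_j \Pi_j^{P_1} \le I^{P_1}$ (since the ranges are disjoint), we obtain $I^P \otimes \sigma^Q \ge \rho^{PQ}$, so $s \le \Tr\sigma = \sum_i s_i$.

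For the lower bound $s \ge \frac{1}{m}\sum_i s_i$, I would invoke strong duality (both primal and dual are strictly feasible, so $s = t$) and build a dual feasible point from optimal dual solutions $y_i$ of each sub-problem. Picking any unit vector $|v_i\rangle$ in the range of $\Pi_i^{P_1}$ (so that $\langle v_i|\Pi_j|v_i\rangle = \delta_{ij}$ by disjointness) and setting
\[
y \;=\; \frac{1}{m}\sum_{i=1}^m |v_i\rangle\langle v_i|^{P_1} \otimes y_i^{P_2Q},
\]
feasibility is checked by $\Tr_P[y] = \frac{1}{m}\sum_i \Tr_{P_2}[y_i^{P_2Q}] \le \frac{1}{m}\sum_i I_Q = I_Q$, while the objective evaluates to $\langle \rho^{PQ}, y\rangle = \frac{1}{m}\sum_{i,j}\langle v_i|\Pi_j|v_i\rangle \langle \rho_j, y_i\rangle = \frac{1}{m}\sum_i s_i$, giving $s = t \ge \frac{1}{m}\sum_i s_i$.

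The only subtle point, which I view as the main (though modest) obstacle, is the $1/m$ factor in the lower bound: since the partial traces $\Tr_{P_2}[y_i]$ can each saturate $I_Q$, naively summing them would violate the dual constraint by a factor of $m$, and the rescaling by $1/m$ is necessary. This suggests the inequality is generally tight only up to a factor of $m$, which is acceptable for how the lemma is used (bounding the contribution of $m = O(n)$ Hamming-weight sectors, contributing only a $\log n$ correction in $H_{\min}$).
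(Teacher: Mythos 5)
Your proof is correct and follows essentially the same route as the paper: the primal candidate $\sigma=\sum_i\sigma_i$ for the upper bound and a $\tfrac1m$-weighted dual candidate supported on the sectors for the lower bound, with strong duality closing the argument. The only cosmetic difference is that you use rank-one states $|v_i\rangle\langle v_i|$ in place of the paper's normalized projectors $\Pi_i/\Tr[\Pi_i]$, which is an equivalent choice.
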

% {\color{red}Maybe the exact answer is $\min \Tr[\sigma],\quad \sigma \ge \sigma_i$?}

% {\color{red}Could have better bound given the support of $\rho_i^{P_2Q}$ are disjoint?}
\begin{proof}
We prove the lemma by constructing feasible solutions of the primal and the dual. Let
\[
    \sigma=\sum_i \sigma_i
\]
where $\sigma_i$ is the optimal solution for the primal SDP of $\rho_i$. Then it is natural that $\Tr[\sigma]=\sum_i s_i$, and the condition holds because
\[
    I^{P_1P_2} \otimes \sigma \ge \sum_i \Pi_i^{P_1} \otimes I^{P_2} \otimes \sigma_i^Q \ge \sum_i \Pi_i^{P_1} \otimes \rho_i^{P_2Q} = \rho.
\]

For the dual SDP, let
\[
    y = \frac{1}{m}\sum_i \frac{\Pi_i^{P_1}}{\Tr[\Pi_i]} \otimes y_i^{P_2Q}
\]
where $y_i$ is the optimal solution for the dual SDP of $\rho_i$. Then
\[
    \<\rho,y\> = \frac{1}{m}\sum_i \Tr\left[\Pi_i\frac{\Pi_i}{\Tr[\Pi_i]}\right]\Tr[y_i\rho_i] = \frac{1}{m}\sum_i s_i,
\]
and
\[
    \Tr_P y = \frac{1}{m}\sum_i \Tr \frac{\Pi_i}{\Tr[\Pi_i]} \Tr_{P_2}[y_i] \le \frac{1}{m}\sum_i I^Q = I^Q.
\]
\end{proof}
Note that the state $|\Gamma_i\>$ is a pure state, so its min entropy can be calculated using Eq.~\eqref{eq:pure-min}. Therefore we have
\begin{align*}
    H_{\min}(A^*|RE)_{\Gamma_i} = -2\log\left(\sum_j \frac{1}{\sqrt{2^k\binom{n}{j}}}\binom{t}{j-i}\binom{k}{j-\alpha}\right),
\end{align*}
and the value for the corresponding SDP is
\[
  \left(\sum_j \frac{1}{\sqrt{2^k\binom{n}{j}}}\binom{t}{j-i}\binom{k}{j-\alpha}\right)^2. 
\]

Note that $j$ and $i$ should satisfy
\[
    0 \le j-i \le t, \quad 0 \le j-\alpha \le k,
\]
so $\alpha-t \le i \le \alpha + k$, and there are at most $k+t+1$ possible values for $i$. By Lemma~\ref{lem:sum} we have
\begin{equation}
     -\log\kappa \le H_{\min}(A^*|RE)_{\Lambda} \le  -\log\frac{\kappa}{k+t+1}. \label{eq:entropy-bound}
\end{equation}
where
\[
    \kappa = \sum_i\left(\sum_j \frac{1}{\sqrt{2^k\binom{n}{j}}}\binom{t}{j-i}\binom{k}{j-\alpha}\right)^2.
\]
Note that
\[
    \frac{1}{\sqrt{2^k\binom{n}{j}}}\binom{t}{j-i}\binom{k}{j-\alpha} \le 2^{-k/2}\binom{t}{t/2}\binom{k}{k/2}\frac{1}{\sqrt{\min\{\binom{n}{\alpha},\binom{n}{\alpha+k}\}}},
\]
so from Eq.~\eqref{eq:entropy-bound} we have
\[
    H_{\min}(A^*|RE)_{\Lambda} \ge n \min\left\{H\left(\frac{\alpha}{n}\right),H\left(\frac{\alpha+k}{n}\right)\right\}-2t-k+O(\log n)
\]
for general values of $t$ and $k$ as long as $\alpha$ is linear in $n$. Here $H(\cdot)$ is the binomial entropy function
\[
    H(x)= -x\log x -(1-x)\log(1-x), \quad 0 \le x \le 1.
\]
If $t^2k^2=O(n)$, this implies $H_{\min}(A^*|RE)_{\Lambda} = \Omega(n)$.

\linghang{
When $\alpha$, $k$ and $t$ are all $O(1)$ and does not depend on $n$, the bound in Eq.~\eqref{eq:entropy-bound} imples that
\[
     H_{\min}(A^*|RE)_{\Lambda} \ge \alpha\log n + O(1).
\]
}

\section{Average state and Choi error}
\label{app:average-state}
Following the previous definitions, $\Phi_\text{avg}$ is a joint state on $n$-qubit register $A$ and $k$-qubit register $R$. From Eq.~\eqref{eq:avg-state},
\begin{align*}
    \Phi_\text{avg}^{RA} =& \E_{U\sim H_\times} U(|\hat\phi^{(k)}\>\<\hat\phi^{(k)}|^{A_1R}\otimes |\psi_\alpha\>\<\psi_\alpha|^{A_2})U^\dagger \\
    =& \sum_{j=0}^k \Tr_A[\Pi_j^A(|\hat\phi^{(k)}\>\<\hat\phi^{(k)}|^{A_1R}\otimes |\psi_\alpha\>\<\psi_\alpha|^{A_2})\Pi_j^A] \otimes \frac{\Pi_j^A}{\binom{n}{j}} \\
    =& 2^{-k}\sum_{j=\alpha}^{k+\alpha} \Pi_{j-\alpha}^R \otimes \frac{\Pi_j^A}{\binom{n}{j}},
\end{align*}
and
\[
    \Tr_{n-t}\Phi_\text{avg}^{RA}=2^{-k}\sum_{j=\alpha}^{k+\alpha} \sum_{i=0}^{t} \Pi_{j-\alpha}^R \otimes \Pi_i^E\frac{\binom{n-t}{j-i}}{\binom{n}{j}} = 2^{-k}\sum_{j=0}^{k} \sum_{i=0}^{t} \Pi_{j}^R \otimes \Pi_i^E\frac{\binom{n-t}{j+\alpha-i}}{\binom{n}{j+\alpha}},
\]
where $E$ refers to the $t$-qubit register that the complementary channel maps to. To get an upper bound for the second term of Eq.~\eqref{eq:bound}, we can replace the minimization over $\zeta$ by an arbitrary fixed $\zeta_0$, which we choose to be the marginal state
\[
    \zeta_0^E=\Tr_R \Tr_{n-t}\Phi_\text{avg}^{RA} = 2^{-k}\sum_{i=0}^{t}\Pi_i^E \sum_{j=\alpha}^{k+\alpha}  \frac{\binom{k}{j-\alpha}\binom{n-t}{j-i}}{\binom{n}{j}}.
\]
We define
\begin{equation}
    \quad \beta_i = 2^{-k}\sum_{j=\alpha}^{k+\alpha}  \frac{\binom{k}{j-\alpha}\binom{n-t}{j-i}}{\binom{n}{j}} = 2^{-k}\sum_{j=0}^{k}  \frac{\binom{k}{j}\binom{n-t}{j+\alpha-i}}{\binom{n}{j+\alpha}},\label{eq:beta}
\end{equation}
so that
\[
    \zeta_0^E = \sum_{i=0}^{t}\beta_i\Pi_i^E
\]

Note that all the states are diagonal, the fidelity is given by
\begin{align}
    F\left(\Tr_{n-t}\Phi_\text{avg},\frac{I}{2^k}\otimes \zeta_0\right) =& \sum_{j=0}^{k} \sum_{i=0}^{t} \Tr[\Pi_j^R \otimes \Pi_i^A]\sqrt{2^{-k}\beta_i \times 2^{-k}\frac{\binom{n-t}{j+\alpha-i}}{\binom{n}{j+\alpha}}} \nonumber \\ =&
    2^{-k}\sum_{j=0}^{k} \sum_{i=0}^{t} \binom{k}{j}\binom{t}{i}\sqrt{\beta_i \frac{\binom{n-t}{j+\alpha-i}}{\binom{n}{j+\alpha}}} \label{eq:choi-fidelity}
\end{align}

% \section{Some Calculations Of $\beta_i$}
For any nonnegative number $n$ and real number $x$, we define
\[
    x^{\underline n} = x(x-1)\ldots (x-n+1),\quad x^{\overline n} = x(x+1)\ldots (x+n-1).
\]
They are related by
\[
    x^{\ol n} = (x+n-1)^{\ul n}, \quad x^{\ul n} = (x-n+1)^{\ol n}
\]
and
\[
    x^{\ul n} = (-x)^{\ol n} (-1)^n, \quad x^{\ol n} = (-x)^{\ul n} (-1)^n.
\]

It could be verified that the following binomial theorems hold (by induction on $n$)
\[
    (x+y)^{\underline n} = \sum_{k=0}^n \binom{n}{k} x^{\ul k} y^{\ul{n-k}},\quad (x+y)^{\ol n} = \sum_{k=0}^n \binom{n}{k} x^{\ol k} y^{\ol{n-k}}.
\]
Then from the definition of $\beta_i$ in Eq.~\eqref{eq:beta} we have
\begin{align*}
    \beta_i =& 2^{-k}\sum_{j=0}^{k}  \frac{\binom{k}{j}\binom{n-t}{j+\alpha-i}}{\binom{n}{j+\alpha}} \\
    =& \frac{1}{n^{\ul t}2^k} \sum_{j=0}^k \binom{k}{j} {(j+\alpha)^{\ul i}(n-\alpha-j)^{\ul {t-i}}} \\
    =& \frac{1}{n^{\ul t}2^k} \sum_{j,x} \binom{k}{j} \binom{i}{x}j^{\ul x}\alpha^{\ul {i-x}}(-1)^{t-i}(j-n+\alpha)^{\ol {t-i}} \\
    =& \frac{1}{n^{\ul t}2^k} \sum_{j,x} \binom{k}{j} \binom{i}{x}j^{\ul x}\alpha^{\ul {i-x}}(-1)^{t-i}(j-n+\alpha+t-i-1)^{\ul {t-i}} \\
    =& \frac{1}{n^{\ul t}2^k} \sum_{j,x,y} \binom{k}{j} \binom{i}{x}j^{\ul x}\alpha^{\ul {i-x}}(-1)^{t-i}\binom{t-i}{y}(j-x)^{\ul y}(x-n+\alpha+t-i-1)^{\ul {t-i-y}} \\
    =& \frac{1}{n^{\ul t}2^k} \sum_{j,x,y}\binom{k}{j}j^{\ul{x+y}}  \binom{i}{x}\alpha^{\ul {i-x}}(-1)^{y}\binom{t-i}{y}(-x+n-\alpha-t+i+1)^{\ol {t-i-y}} \\
    =& \frac{1}{n^{\ul t}}  \sum_{x,y} 2^{-(x+y)}(-1)^{y}k^{\ul{x+y}} \binom{i}{x}\binom{t-i}{y}\alpha^{\ul {i-x}}(n-\alpha-x-y)^{\ul {t-i-y}}. \\
\end{align*}
% {\color{red}
% Alternatively
% \[
%     \beta_i= \frac{1}{n^{\ul t}}  \sum_{x,y} 2^{-(x+y)}(-1)^{y}k^{\ul{x+y}} \binom{i}{x}\binom{t-i}{y}(\alpha+y)^{\ul {i-x}}(n-\alpha-y)^{\ul {t-i-y}}
% \]
% if the second term is expanded first. Not sure which version is more useful.
% }
Suppose that $kt=o( n)$, we can see that the term with $x=x_0$ and $y=y_0$ is of order $(kt/n)^{x_0+y_0}$ times the term with $x=y=0$. Then by keeping the terms $x+y \le 2$ we have the series expansion
\[
    \beta_i = \frac{\alpha^{\ul i}(n-\alpha)^{\ul{t-i}}}{n^{\ul t}}\left(1+\frac{i k-a k t}{\left(2 a-2 a^2\right) n}+\frac{\xi_1}{8 (a-1)^2 a^2 n^2}+O\left(\frac{k^3t^3}{n^3}\right)\right),
\]
where $a=\alpha/n$ and
\[
    \xi_1 = k \left(a^2 (k-1) (t-1) t+i^2 (-4 a+k+3)+i (2 a (2 a (t-1)-(k+1) t+k+3)-k-3)\right).
\]
Similarly
\[
    \frac{\binom{n-t}{j+\alpha-i}}{\binom{n}{j+\alpha}} = \frac{\alpha^{\ul i}(n-\alpha)^{\ul{t-i}}}{n^{\ul t}}\left(1+\frac{j (a t-i)}{(a-1) a n}+\frac{\xi_2}{2 (a-1)^2 a^2 n^2}+O\left(\frac{k^3t^3}{n^3}\right)\right)
\]
with
\[
    \xi_2 = j \left(a^2 (j-1) (t-1) t+i^2 (-2 a+j+1)+i (2 a (a (t-1)+j (-t)+j+1)-j-1)\right).
\]

Therefore,
\[
    \sqrt{\beta_i \frac{\binom{n-t}{j+\alpha-i}}{\binom{n}{j+\alpha}}} = \frac{\alpha^{\ul i}(n-\alpha)^{\ul{t-i}}}{n^{\ul t}}\left(1+\frac{(2 j+k) (a t-i)}{4 (a-1) a n}+\frac{\xi_3}{32 (a-1)^2 a^2 n^2}+O\left(\frac{k^3t^3}{n^3}\right)\right),
\]
where
\begin{align*}
    \xi_3 =& a^2 t \left(4 j^2 (t-2)+4 j ((k-2) t+2)+k (k (t-2)-2 t+2)\right) \\
    &+i^2 \left(4 j (-4 a+k+2)+k (-8 a+k+6)+4 j^2\right)\\
    &+2 i \left(-4 j^2 (a (t-2)+1)+j (4 a (2 a (t-1)-k t+2)-4)\right) \\
    &-2ik\left( (a (-4 a (t-1)+(k+2) t-2 (k+3))+k+3)\right).
\end{align*}
Then we can multiply this by $\binom{t}{i}$ and sum over $i$ to have the fidelity
\begin{equation}
    \sum_i \binom{t}{i}\sqrt{\beta_i \frac{\binom{n-t}{j+\alpha-i}}{\binom{n}{j+\alpha}}} = 1-\frac{t(k-2j)^2}{32a(1-a)n^2}+O\left(\frac{k^3t^3}{n^3}\right). \label{eq:fidelity}
\end{equation}
Now we plug this into Eq.~\eqref{eq:choi-fidelity} and have
\[
    F\left(\Tr_{n-t}\Phi_\text{avg},\frac{I}{2^k}\otimes \zeta_0\right) = 1-\frac{tk}{32a(1-a)n^2}+O\left(\frac{k^3t^3}{n^3}\right),
\]
and the corresponding purified distance is
\[
    P\left(\Tr_{n-t}\Phi_\text{avg},\frac{I}{2^k}\otimes \zeta_0\right) = \frac{\sqrt{tk}}{4n\sqrt{a(1-a)}}\left(1+O\left(\frac{t^2k^2}{n}\right)\right).
\]
\linghang{
Another interesring case to consider is $\alpha=O(1)$, which could be directly evaluated from Eq.~\eqref{eq:choi-fidelity} for small values of $k$ and $t$. For example, when $k=t=1$, we have
\[
    F\left(\Tr_{n-t}\Phi_\text{avg},\frac{I}{2^k}\otimes \zeta_0\right) = 1+\frac{-\sqrt{2}-2\sqrt{2}\alpha+\sqrt{\alpha(2\alpha+1)}+\sqrt{(\alpha+1)(2\alpha+1)}}{2\sqrt{2}n}+O(n^{-2}),
\]
and when $k=t=2$,
\[
    F\left(\Tr_{n-t}\Phi_\text{avg},\frac{I}{2^k}\otimes \zeta_0\right) = 1+\frac{-2-2 \alpha +\sqrt{\alpha  (\alpha +1)}+\sqrt{(\alpha +1) (\alpha +2)}}{2 n}+1+O(n^{-2}).
\]
In both cases the purified distance is $O(n^{-1/2})$.
}
\section{Conditional min-entropy bounds for worst-case error}
\label{app:worst-entropy}
We are interested in the conditional entropy $H_{\min }\left(A^{*} \mid R E\right)_{\Lambda(\Psi, \mathcal{T})}$ with the initial states $\Psi$ being $|x\>$, $|\mu^\pm_{x,x'}\>$ and $|\nu^\pm_{x,x'}\>$, where
\[
    |\mu^\pm_{x,x'}\> = \frac{1}{\sqrt 2}(|x\> \pm |x'\>), \quad|\nu^\pm_{x,x'}\> = \frac{1}{\sqrt 2}(|x\> \pm i|x'\>).
\]
In other words, the state $|\Psi^{AR}\>$ is one of the above states appended by $|\psi_\alpha\>$, a state with Hamming weight $\alpha$. The reference system $R$ is now trivial, in contrast to the $k$ qubits in Appendix~\ref{app:entropy}. Here the channel $\mathcal{T}$ is the erasure channel over $n-t$ qubits.  Using Eq.~\eqref{eq:proj}, we have
\begin{align}
    \Lambda(\Psi, \mathcal{T}) =& \frac{1}{\binom{n}{|x|+\alpha}} |x\>\<x|^{A_1} \otimes |\psi_\alpha\>\<\psi_\alpha|^{A_2} \otimes \sum_i \binom{t}{|x|+\alpha-i}|\phi_{|x|+\alpha-i}^{(t)}\>\<\phi_{|x|+\alpha-i}^{(t)}|^{\bar A_1 E} (\Pi_i^{(n-t)})^{\bar A_2} \nonumber\\
    =& \sum_i (\Pi_i^{(n-t)})^{\bar A_2} \otimes |\gamma_{x,i}\>\<\gamma_{x,i}|, \label{eq:lambda}
\end{align}
where
\[
    |\gamma_{x,i}\> = \sqrt{\frac{\binom{t}{|x|+\alpha-i}}{\binom{n}{|x|+\alpha}}}|x\>^{A_1} \otimes |\psi_\alpha\>^{A_2} \otimes |\phi_{|x|+\alpha-i}^{(t)}\>^{\bar A_1 E}.
\]
Now using Lemma~\ref{lem:sum} and Eq.~\eqref{eq:pure-min}, we have
\[
   -\log\left[ \frac{\sum_i\binom{t}{|x|+\alpha-i}^2}{\binom{n}{|x|+\alpha}}\right]  \le  H_{\min }^x \le -\log\left[\frac{\sum_i\binom{t}{|x|+\alpha-i}^2}{\binom{n}{|x|+\alpha}(t+1)}\right], 
\]
where $H_{\min }^x$ stands for $H_{\min }\left(A^{*} \mid R E\right)_{\Lambda(\Psi, \mathcal{T})}$ when the initial state is $|x\>\<x|$. This could be further simplified to
\[
   -\log\left[ \frac{\binom{2t}{t}}{\binom{n}{|x|+\alpha}}\right]  \le  H_{\min }^x \le -\log\left[\frac{\binom{2t}{t}}{\binom{n}{|x|+\alpha}(t+1)}\right]. 
\]
When the initial state is $|\mu^\pm_{x,x'}\>$ and $|\nu^\pm_{x,x'}\>$, the state in Eq.~\eqref{eq:lambda} will have the same form, with $|\gamma_{x,i}\>$ replaced by $\frac{1}{\sqrt{2}}(|\gamma_{x,i}\>\pm |\gamma_{x',i}\>)$ and $\frac{1}{\sqrt{2}}(|\gamma_{x,i}\>\pm i|\gamma_{x',i}\>)$ correspondingly. Then
\[
   -\log\left[\frac{\chi}{2} \right]  \le  H_{\min }^{x,x'} \le -\log\left[\frac{\chi}{2(t+1)} \right],
\]
where $H_{\min }^{x,x'}$ stands for $H_{\min }\left(A^{*} \mid R E\right)_{\Lambda(\Psi, \mathcal{T})}$ when the initial state is one of $|\mu^\pm_{x,x'}\>$ or $|\nu^\pm_{x,x'}\>$, and
\[
    \chi = \sum_i\left(\frac{\binom{t}{|x|+\alpha-i}}{\sqrt{\binom{n}{|x|+\alpha}}}+\frac{\binom{t}{|x'|+\alpha-i}}{\sqrt{\binom{n}{|x'|+\alpha}}}\right)^2 = \frac{\binom{2t}{t}}{\binom{n}{|x|-\alpha}}+\frac{\binom{2t}{t}}{\binom{n}{|x'|-\alpha}}+\frac{2\binom{2t}{t+|x|-|x'|}}{\sqrt{\binom{n}{|x|-\alpha}\binom{n}{|x'|-\alpha}}}.
\]
% This further simplifies to
% \[
% -\log\left[\frac{\binom{2t}{t}}{2\binom{n}{|x|-\alpha}}+\frac{\binom{2t}{t}}{2\binom{n}{|x'|-\alpha}}+\frac{\binom{2t}{t+|x|-|x'|}}{\sqrt{\binom{n}{|x|-\alpha}\binom{n}{|x'|-\alpha}}}\right]  \le  H_{\min }^{x,x'} \le -\log\left[\frac{1}{t+1}\left(\frac{\binom{2t}{t}}{2\binom{n}{|x|-\alpha}}+2\frac{\binom{2t}{t}}{2\binom{n}{|x'|-\alpha}}+\frac{\binom{2t}{t+|x|-|x'|}}{\sqrt{\binom{n}{|x|-\alpha}\binom{n}{|x'|-\alpha}}}\right)\right]
% \]

Suppose that in the large $n$ limit $\frac{\alpha}{n}$ and $\frac{\alpha+k}{n}$ are both constants between 0 and 1, we have
\[
    H_{\min }^x = n H\left(\frac{|x|+\alpha}{n}\right) -2t +O(\log n)\ge n \min\left\{H\left(\frac{\alpha}{n}\right),H\left(\frac{\alpha+k}{n}\right)\right\} - 2t + O(\log n),
\]
and
\begin{align*}
    H_{\min }^{x,x'} \ge & n \min\left\{H\left(\frac{|x|+\alpha}{n}\right),H\left(\frac{|x'|+\alpha}{n}\right)\right\} - 2t + O(\log n) \\ 
    \ge & n \min\left\{H\left(\frac{\alpha}{n}\right),H\left(\frac{\alpha+k}{n}\right)\right\} - 2t + O(\log n).
\end{align*}
%\section{Average State For Worst Case Distance}
\section{Average state and worst-case error}
\label{app:worst-avg}
It is easy to see that
\[
    \E_{U\sim H_\times} U(|x\>\<x'| \otimes |\psi\>\<\psi|) U^\dagger =
    \begin{cases}
        0, & x \not= x' \\
        \Pi^{(n)}_{|x|+\alpha}/\binom{n}{|x|+\alpha}, &x = x'.
    \end{cases}
\]
In the case of $x=x'$, we take trace over $n-t$ qubits and have
\[
    \rho^{x,x}_\text{avg} = \sum_{i=0}^t \Pi_i^{(t)}\frac{\binom{n-t}{|x|+\alpha-i}}{\binom{n}{|x|+\alpha}},
\]
and we wish to show that this is close to some fixed state $\zeta$ independent of $x$. We propose that $\zeta$ is $\rho^{x,x}_\text{avg}$ averaged over $x$,
\[
    \zeta = \frac{1}{2^k}\sum_{j=0}^k\Pi_i^{(t)}\frac{\binom{k}{j}\binom{n-t}{j+\alpha-i}}{\binom{n}{j+\alpha}}=\sum_i \beta_i \Pi_i^{(t)},
\]
where $\beta_i$ is the quantity previously defined in Eq.~\eqref{eq:beta} from Appendix~\ref{app:average-state}. For any $x$, the fidelity is given by
\[
    F(\rho^{x,x}_\text{avg},\zeta) = \sum_i \binom{t}{i} \sqrt{\beta_i \frac{\binom{n-t}{|x|+\alpha-i}}{\binom{n}{|x|+\alpha}}}.
\]
which is exactly the result in Eq.~\eqref{eq:fidelity} with $j=|x|$. Then we have
\[
    \max_x P(\rho^{x,x}_\text{avg},\zeta) = \frac{k\sqrt{t}}{4n\sqrt{a(1-a)}}\left(1+\left(\frac{kt^2}{n}\right)\right).
\]

\bibliography{covariance}

\begin{thebibliography}{10}

\bibitem{PhysRevA.52.R2493}
Peter~W. Shor.
\newblock Scheme for reducing decoherence in quantum computer memory.
\newblock {\em Phys. Rev. A}, 52:R2493--R2496, Oct 1995.

\bibitem{nielsen2011}
Michael~A Nielsen and Isaac~L Chuang.
\newblock {\em Quantum Computation and Quantum Information}.
\newblock Cambridge University Press, 2011.

\bibitem{gottesman2010}
Daniel Gottesman.
\newblock An introduction to quantum error correction and fault-tolerant
  quantum computation.
\newblock In {\em Quantum information science and its contributions to
  mathematics, Proceedings of Symposia in Applied Mathematics}, volume~68,
  pages 13--58, 2010.

\bibitem{lidar2013}
Daniel~A Lidar and Todd~A Brun.
\newblock {\em Quantum error correction}.
\newblock Cambridge university press, 2013.

\bibitem{almheiri2015}
Ahmed Almheiri, Xi~Dong, and Daniel Harlow.
\newblock Bulk locality and quantum error correction in ads/cft.
\newblock {\em Journal of High Energy Physics}, 2015(4):163, 2015.

\bibitem{pastawski2015}
Fernando Pastawski, Beni Yoshida, Daniel Harlow, and John Preskill.
\newblock Holographic quantum error-correcting codes: Toy models for the
  bulk/boundary correspondence.
\newblock {\em Journal of High Energy Physics}, 2015(6):149, 2015.

\bibitem{KITAEV20032}
A.Yu. Kitaev.
\newblock Fault-tolerant quantum computation by anyons.
\newblock {\em Annals of Physics}, 303(1):2--30, 2003.

\bibitem{2015arXiv150802595Z}
Bei {Zeng}, Xie {Chen}, Duan-Lu {Zhou}, and Xiao-Gang {Wen}.
\newblock {Quantum Information Meets Quantum Matter -- From Quantum
  Entanglement to Topological Phase in Many-Body Systems}.
\newblock {\em arXiv e-prints}, page arXiv:1508.02595, August 2015.

\bibitem{brandao2019}
Fernando~GSL Brandao, Elizabeth Crosson, M~Burak {\c{S}}ahino{\u{g}}lu, and
  John Bowen.
\newblock Quantum error correcting codes in eigenstates of
  translation-invariant spin chains.
\newblock {\em Physical Review Letters}, 123(11):110502, 2019.

\bibitem{hayden2017}
Patrick Hayden, Sepehr Nezami, Sandu Popescu, and Grant Salton.
\newblock Error correction of quantum reference frame information.
\newblock {\em arXiv preprint arXiv:1709.04471}, 2017.

\bibitem{faist2019}
Philippe Faist, Sepehr Nezami, Victor~V Albert, Grant Salton, Fernando
  Pastawski, Patrick Hayden, and John Preskill.
\newblock Continuous symmetries and approximate quantum error correction.
\newblock {\em arXiv preprint arXiv:1902.07714}, 2019.

\bibitem{Woods2020continuousgroupsof}
Mischa~P. Woods and {\'{A}}lvaro~M. Alhambra.
\newblock Continuous groups of transversal gates for quantum error correcting
  codes from finite clock reference frames.
\newblock {\em {Quantum}}, 4:245, March 2020.

\bibitem{eastin2009}
Bryan Eastin and Emanuel Knill.
\newblock Restrictions on transversal encoded quantum gate sets.
\newblock {\em Physical review letters}, 102(11):110502, 2009.

\bibitem{harlow2018}
Daniel Harlow and Hirosi Ooguri.
\newblock Symmetries in quantum field theory and quantum gravity.
\newblock {\em arXiv preprint arXiv:1810.05338}, 2018.

\bibitem{harlow2019}
Daniel Harlow and Hirosi Ooguri.
\newblock Constraints on symmetries from holography.
\newblock {\em Physical review letters}, 122(19):191601, 2019.

\bibitem{kohler2019}
Tamara Kohler and Toby Cubitt.
\newblock Toy models of holographic duality between local hamiltonians.
\newblock {\em Journal of High Energy Physics}, 2019(8):17, 2019.

\bibitem{KubicaD}
Aleksander {Kubica} and Rafal {Demkowicz-Dobrzanski}.
\newblock {Using Quantum Metrological Bounds in Quantum Error Correction: A
  Simple Proof of the Approximate Eastin-Knill Theorem}.
\newblock {\em arXiv e-prints}, page arXiv:2004.11893, April 2020.

\bibitem{zlj20}
Sisi {Zhou}, Zi-Wen {Liu}, and Liang {Jiang}.
\newblock {New perspectives on covariant quantum error correction}.
\newblock {\em arXiv e-prints}, page arXiv:2005.11918, May 2020.

\bibitem{YangWoods}
Yuxiang {Yang}, Yin {Mo}, Joseph~M. {Renes}, Giulio {Chiribella}, and Mischa~P.
  {Woods}.
\newblock {Covariant Quantum Error Correcting Codes via Reference Frames}.
\newblock {\em arXiv e-prints}, page arXiv:2007.09154, July 2020.

\bibitem{2020arXiv201011822F}
Kun {Fang} and Zi-Wen {Liu}.
\newblock {No-go theorems for quantum resource purification: new approach and
  channel theory}.
\newblock {\em arXiv e-prints}, page arXiv:2010.11822, October 2020.

\bibitem{hayden2007}
Patrick Hayden and John Preskill.
\newblock Black holes as mirrors: quantum information in random subsystems.
\newblock {\em Journal of high energy physics}, 2007(09):120, 2007.

\bibitem{HQRY16}
Pavan Hosur, Xiao-Liang Qi, Daniel~A Roberts, and Beni Yoshida.
\newblock Chaos in quantum channels.
\newblock {\em Journal of High Energy Physics}, 2016(2):4, 2016.

\bibitem{2019arXiv190602219H}
Aram~W. {Harrow}, Linghang {Kong}, Zi-Wen {Liu}, Saeed {Mehraban}, and Peter~W.
  {Shor}.
\newblock {A Separation of Out-of-time-ordered Correlation and Entanglement}.
\newblock {\em arXiv e-prints}, page arXiv:1906.02219, June 2019.

\bibitem{Nahum2}
Adam Nahum, Jonathan Ruhman, Sagar Vijay, and Jeongwan Haah.
\newblock Quantum entanglement growth under random unitary dynamics.
\newblock {\em Phys. Rev. X}, 7:031016, Jul 2017.

\bibitem{Nahum1}
Adam Nahum, Sagar Vijay, and Jeongwan Haah.
\newblock Operator spreading in random unitary circuits.
\newblock {\em Phys. Rev. X}, 8:021014, Apr 2018.

\bibitem{beny2010}
C{\'e}dric B{\'e}ny and Ognyan Oreshkov.
\newblock General conditions for approximate quantum error correction and
  near-optimal recovery channels.
\newblock {\em Physical review letters}, 104(12):120501, 2010.

\bibitem{wakakuwa2019}
Eyuri Wakakuwa and Yoshifumi Nakata.
\newblock One-shot randomized and nonrandomized partial decoupling.
\newblock In {\em 2019 IEEE International Symposium on Information Theory
  (ISIT)}, pages 3007--3011. IEEE, 2019.

\bibitem{tomamichel2010}
Marco Tomamichel, Roger Colbeck, and Renato Renner.
\newblock Duality between smooth min-and max-entropies.
\newblock {\em IEEE Transactions on information theory}, 56(9):4674--4681,
  2010.

\bibitem{gilchrist2005distance}
Alexei Gilchrist, Nathan~K Langford, and Michael~A Nielsen.
\newblock Distance measures to compare real and ideal quantum processes.
\newblock {\em Physical Review A}, 71(6):062310, 2005.

\bibitem{1998quant.ph..7091H}
Pawel {Horodecki}, Michal {Horodecki}, and Ryszard {Horodecki}.
\newblock {General teleportation channel, singlet fraction and
  quasi-distillation}.
\newblock {\em arXiv e-prints}, pages quant--ph/9807091, July 1998.

\bibitem{konig2009}
Robert Konig, Renato Renner, and Christian Schaffner.
\newblock The operational meaning of min-and max-entropy.
\newblock {\em IEEE Transactions on Information theory}, 55(9):4337--4347,
  2009.

\bibitem{dupuis2014}
Fr{\'e}d{\'e}ric Dupuis, Mario Berta, J{\"u}rg Wullschleger, and Renato Renner.
\newblock One-shot decoupling.
\newblock {\em Communications in Mathematical Physics}, 328(1):251--284, 2014.

\bibitem{harrow2009}
Aram~W Harrow and Richard~A Low.
\newblock Random quantum circuits are approximate 2-designs.
\newblock {\em Communications in Mathematical Physics}, 291(1):257--302, 2009.

\bibitem{brandao2016}
Fernando~GSL Brandao, Aram~W Harrow, and Micha{\l} Horodecki.
\newblock Local random quantum circuits are approximate polynomial-designs.
\newblock {\em Communications in Mathematical Physics}, 346(2):397--434, 2016.

\bibitem{BrownFawzi:decoupling}
Winton Brown and Omar Fawzi.
\newblock Decoupling with random quantum circuits.
\newblock {\em Communications in Mathematical Physics}, 340(3):867--900, 2015.

\bibitem{Brown13shortrandom}
Winton Brown and Omar Fawzi.
\newblock Short random circuits define good quantum error correcting codes.
\newblock In {\em In Proc. IEEE ISIT}, 2013.

\bibitem{cleve2015}
Richard Cleve, Debbie Leung, Li~Liu, and Chunhao Wang.
\newblock Near-linear constructions of exact unitary 2-designs.
\newblock {\em arXiv preprint arXiv:1501.04592}, 2015.

\bibitem{nakata2017}
Yoshifumi Nakata, Christoph Hirche, Masato Koashi, and Andreas Winter.
\newblock Efficient quantum pseudorandomness with nearly time-independent
  hamiltonian dynamics.
\newblock {\em Physical Review X}, 7(2):021006, 2017.

\bibitem{Marvian}
Iman {Marvian}.
\newblock {Locality and Conservation Laws: How, in the presence of symmetry,
  locality restricts realizable unitaries}.
\newblock {\em arXiv e-prints}, page arXiv:2003.05524, March 2020.

\bibitem{liu2020}
Junyu Liu.
\newblock Scrambling and decoding the charged quantum information.
\newblock {\em Physical Review Research}, 2(4):043164, 2020.

\bibitem{Arkani_Hamed_2007}
Nima Arkani-Hamed, Lubo{\v{s}} Motl, Alberto Nicolis, and Cumrun Vafa.
\newblock The string landscape, black holes and gravity as the weakest force.
\newblock {\em Journal of High Energy Physics}, 2007(06):060--060, jun 2007.

\bibitem{Yoshida:softmode}
Beni Yoshida.
\newblock Soft mode and interior operator in the hayden-preskill thought
  experiment.
\newblock {\em Phys. Rev. D}, 100:086001, Oct 2019.

\bibitem{PhysRevX.9.031009}
Brian Skinner, Jonathan Ruhman, and Adam Nahum.
\newblock Measurement-induced phase transitions in the dynamics of
  entanglement.
\newblock {\em Phys. Rev. X}, 9:031009, Jul 2019.

\bibitem{PhysRevB.100.134306}
Yaodong Li, Xiao Chen, and Matthew P.~A. Fisher.
\newblock Measurement-driven entanglement transition in hybrid quantum
  circuits.
\newblock {\em Phys. Rev. B}, 100:134306, Oct 2019.

\bibitem{PhysRevB.99.224307}
Amos Chan, Rahul~M. Nandkishore, Michael Pretko, and Graeme Smith.
\newblock Unitary-projective entanglement dynamics.
\newblock {\em Phys. Rev. B}, 99:224307, Jun 2019.

\bibitem{PhysRevX.8.031057}
Vedika Khemani, Ashvin Vishwanath, and David~A. Huse.
\newblock Operator spreading and the emergence of dissipative hydrodynamics
  under unitary evolution with conservation laws.
\newblock {\em Phys. Rev. X}, 8:031057, Sep 2018.

\bibitem{ChoiBaoQiAltman}
Soonwon Choi, Yimu Bao, Xiao-Liang Qi, and Ehud Altman.
\newblock Quantum error correction in scrambling dynamics and
  measurement-induced phase transition.
\newblock {\em Phys. Rev. Lett.}, 125:030505, Jul 2020.

\bibitem{Scott_2008}
A~J Scott.
\newblock Optimizing quantum process tomography with unitary2-designs.
\newblock {\em Journal of Physics A: Mathematical and Theoretical},
  41(5):055308, jan 2008.

\bibitem{2016arXiv160908172Z}
Huangjun {Zhu}, Richard {Kueng}, Markus {Grassl}, and David {Gross}.
\newblock {The Clifford group fails gracefully to be a unitary 4-design}.
\newblock {\em arXiv e-prints}, page arXiv:1609.08172, September 2016.

\bibitem{RobertsYoshida}
Daniel~A. Roberts and Beni Yoshida.
\newblock Chaos and complexity by design.
\newblock {\em Journal of High Energy Physics}, 2017(4):121, 2017.

\bibitem{LLZZ18}
Zi-Wen Liu, Seth Lloyd, Elton Zhu, and Huangjun Zhu.
\newblock Entanglement, quantum randomness, and complexity beyond scrambling.
\newblock {\em Journal of High Energy Physics}, 2018(7):41, 2018.

\bibitem{PhysRevLett.120.130502}
Zi-Wen Liu, Seth Lloyd, Elton~Yechao Zhu, and Huangjun Zhu.
\newblock Generalized entanglement entropies of quantum designs.
\newblock {\em Phys. Rev. Lett.}, 120:130502, Mar 2018.

\end{thebibliography}
\bibliographystyle{unsrt}

\end{document}